\newcommand{\myparagraph}[1]{\smallskip\noindent {\bf #1.}}
\newtheorem{theorem}{Theorem}[section]
\newtheorem{lemma}{Lemma}[section]
\newtheorem{corollary}{Corollary}[section]
\algrenewcommand\algorithmicindent{1em}
\algnewcommand{\algcomment}[1]{\qquad{\color{purple!40!black}\emph{// #1}}}    
\algnewcommand{\alglinecomment}[1]{{\color{purple!40!black}\emph{// #1}}}      
\algnewcommand\alglocal{\textbf{local}}                             
\algnewcommand\algreturn{\textbf{return}}                           
\algnewcommand\algeach{\textbf{each}}                               
\algnewcommand\algto{\textbf{to}}								                  	
\algnewcommand\algis{\textbf{is}}                                   
\algnewcommand\algnot{\textbf{not}}                                 
\algnewcommand\algin{\textbf{in}}                                   
\algnewcommand\algempty{\textbf{empty}}                             
\algnewcommand\algand{\textbf{and}}                                 
\algnewcommand\algor{\textbf{or}}                                   
\algnewcommand\algassign{\ensuremath{\gets} }                       
\algnewcommand\algtrue{\textbf{true}}                               
\algnewcommand\algfalse{\textbf{false}}                             
\algnewcommand\algnull{\ensuremath{\perp}}                          
\algnewcommand\algarray[1]{\textnormal{array}\ensuremath{\langle}#1\ensuremath{\rangle}}    
\algnewcommand\algorithmicinparallel{\textbf{in parallel}}
\newcommand*{\vareps}{\varepsilon}
\DeclareMathOperator{\polylog}{\textnormal{polylog}}
\renewcommand{\log}{\lg}
\newcommand{\bsize}{\ell}
\newcommand{\PRAM}[0]{\ensuremath{\mathsf{PRAM}}}
\newcommand{\crcwpram}{CRCW \ensuremath{\mathsf{PRAM}}}
\newcommand{\multipram}[0]{\emph{multiprefix} CRCW \ensuremath{\mathsf{PRAM}}\renewcommand{\multipram}[0]{multiprefix CRCW \ensuremath{\mathsf{PRAM}}}}
\newcolumntype{P}[1]{>{\centering\arraybackslash}p{#1}}
\begin{document}

  \title{Work-efficient Batch-incremental Minimum Spanning Trees with Applications to the Sliding Window Model}

\author{\normalsize Daniel Anderson \\ \normalsize Carnegie Mellon University \\ \normalsize dlanders@cs.cmu.edu \and \normalsize Guy E. Blelloch \\ \normalsize Carnegie Mellon University \\ \normalsize guyb@cs.cmu.edu \and \normalsize Kanat Tangwongsan \\ \normalsize Mahidol University\\ \normalsize International College \\ \normalsize kanat.tan@mahidol.edu}
\date{}
  \date{}
 
  \maketitle
  
  \begin{abstract}
  Algorithms for dynamically maintaining minimum spanning trees (MSTs) have received much attention in both the parallel and sequential settings. While
  previous work has given optimal algorithms for dense graphs, all existing parallel batch-dynamic algorithms perform polynomial work per update in the worst case for sparse graphs. In this paper, we present the first work-efficient parallel batch-dynamic algorithm for incremental MST, which can insert $\bsize{}$ edges in $O(\bsize{} \log(1+n/\bsize{}))$ work in expectation and $O(\polylog(n))$ span w.h.p. The key
  ingredient of our algorithm is an algorithm for constructing a \emph{compressed path tree} of an edge-weighted tree, which is a smaller tree that contains
  all pairwise heaviest edges between a given set of marked vertices. Using our batch-incremental MST algorithm, we demonstrate a range of applications that become efficiently solvable in parallel in the sliding-window model, such as graph connectivity, approximate MSTs, testing bipartiteness, $k$-certificates, cycle-freeness, and maintaining sparsifiers.
\end{abstract}
  
  \clearpage
  
  \section{Introduction}

Computing the minimum spanning tree (MST) of a weighted undirected graph is a
classic and fundamental problem that has been studied for nearly a century,
going back to early algorithms of Bor\r{u}vka \cite{boruvka1926jistem}, and
Jarn{\'\i}k \cite{jarnik1930jistem} (later rediscovered by Prim
\cite{prim1957shortest} and Dijkstra \cite{dijkstra1959note}), and later, the
perhaps more well-known algorithm of Kruskal \cite{kruskal1956shortest}. The MST
problem is, given a connected weighted undirected graph, to find a set of edges
of minimum total weight that connect every vertex in the graph. More generally,
the minimum spanning forest (MSF) problem is to compute an MST for every connected
component of the graph. The \emph{dynamic} MSF problem is to do so while responding
to edges being inserted into and deleted from the graph . The \emph{incremental}
MSF problem is a special case of the dynamic problem in which edges are only inserted. While most
dynamic data structures handle only a single update at a time, there has also
been work on \emph{batch-dynamic} data structures, which process a batch of
updates in each round. Typically it is assumed that the size of a batch can
vary from round to round. Batch-dynamic data structures have two potential
advantages---they can allow for more parallelism, and they can in some
situations perform less work than processing updates one at a time.

There has been significant interest in parallelizing incremental and dynamic MSF.   Some of this work studies how to implement single updates in parallel~\cite{pawagi1986log,varman1986parallel,varman1988efficient,jung1988parallel,tsin1988handling,das1994n,ferragina1995erew,ferragina1995technique,das1999erew,KPR18}, and some studies batch updates~\cite{pawagi1989parallel,johnson1992optimal, pawagi1993optimal,shen1993parallel,ferragina1994batch,ferragina1996three}.
The most recent and best result~\cite{KPR18} requires $O(\sqrt{n} \log n)$ work per update on $n$ vertices, and only allows single edge updates.  All previous results that support batches of edge updates in polylogarithmic time require $\Omega(n\min(\bsize,n))$ work, where $\bsize$ is the size of the batch.   This is very far from the work performed by the best sequential data structures, which is $O(\log n)$ worst-case time for incremental edge insertions~\cite{sleator1983data,AHLT05}, and $O(\frac{\log^4 n}{\log\log n})$ amortized expected time for fully dynamic edge insertions and deletions~\cite{HRW15}.  

In this paper, we start by presenting a parallel data structure for the \emph{batch-incremental MSF problem}.  It is the first such data structure that achieves polylogarithmic work per edge insertion.  The data structure is work efficient with respect to the fastest sequential single-update data structure, and even more efficient for large batch sizes, achieving optimal linear expected work~\cite{KKT95} when inserting all edges as a batch.  The size of a batch can vary from round to round.  Our main contribution is summarized by the following theorem:

\begin{theorem}
  \label{thm:incremental-mst}
  There exists a data structure that maintains the MSF of a weighted undirected graph that can insert a batch of $\bsize$ edges into a graph
  with $n$ vertices in $O\left(\bsize \log\left(1 + \frac{n}{\bsize}\right)\right)$ work in expectation and $O(\log^2(n))$ span w.h.p.\footnote{We say that $g(n) \in O(f(n))$ \emph{with high probability} (w.h.p.)
    if $g(n) \in O(k f(n))$ with probability at least $1 - O(1/n^k)$ for all $k \geq 0$} in the arbitrary-CRCW PRAM.
\end{theorem}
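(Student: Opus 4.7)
The plan is to reduce each batch update to an MSF computation on a small auxiliary graph of size $O(\bsize)$, and then to carry that reduction out work-efficiently in parallel. Let $T$ denote the currently maintained MSF, and suppose a new batch $B$ of $\bsize$ weighted edges arrives; write $S$ for the set of endpoints of $B$, so $|S| \le 2\bsize$. My first step would be to construct the \emph{compressed path tree} $T' = T'(T, S)$ advertised in the introduction: a tree on $O(\bsize)$ vertices whose edges have the property that for any $u,v \in S$, the maximum edge weight along the $u$-to-$v$ path in $T'$ equals the maximum edge weight along the $u$-to-$v$ path in $T$. By the cycle property of minimum spanning forests, this is exactly the information needed to decide, for each new edge $(u,v,w) \in B$, whether $(u,v,w)$ improves the current MSF and which existing edge (if any) it displaces.

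Next, I would compute the MSF of the auxiliary graph $T' \cup B$, which has $O(\bsize)$ vertices and $O(\bsize)$ edges. Applying a parallel randomized linear-work MSF algorithm (e.g., a PRAM realization of the Karger--Klein--Tarjan scheme) handles this subproblem in $O(\bsize)$ expected work and polylogarithmic span w.h.p. The edges of $T'$ that the auxiliary MSF drops correspond to the edges of $T$ that must be removed, and the edges of $B$ that the auxiliary MSF keeps are exactly those to be inserted into $T$. I would then apply these $O(\bsize)$ swaps in parallel using a batch-parallel dynamic-tree representation of $T$ supporting link, cut, and path-max queries (for instance, parallel Euler-tour trees or RC-trees), contributing an additional $O(\bsize)$ work and $O(\log^2 n)$ span.

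The main obstacle, and evidently the central technical contribution, is constructing the compressed path tree $T'$ within $O(\bsize \log(1 + n/\bsize))$ work and $O(\polylog(n))$ span. The vertex set of $T'$ should consist of $S$ together with the pairwise branch points (LCAs) induced by $S$ in $T$; these can be identified by sorting $S$ along an Euler tour of $T$ and taking range-LCAs over consecutive pairs. Each edge of $T'$ then corresponds to a contiguous path in $T$ along which one must compute the heaviest edge. The sharpened $\log(1 + n/\bsize)$ factor (rather than $\log n$) naturally arises because the $O(\bsize)$ relevant paths partition the reachable portion of $T$ into pieces of average length $n/\bsize$, so an appropriate amortization or weight-balanced decomposition of the underlying path-max data structure should be what yields the claimed bound. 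Carrying this out while keeping the span polylogarithmic, and while $T$ itself evolves from batch to batch, is where I expect the real technical difficulty to lie; the rest of the proof of the theorem should then be an assembly of this core lemma with the standard primitives described above.
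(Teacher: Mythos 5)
Your high-level architecture matches the paper exactly: reduce the batch update to an MSF computation on the $O(\bsize)$-size graph formed by the compressed path tree of the endpoints together with the new edges, use a linear-work randomized parallel MSF algorithm (Cole et al.\ in the paper) on that small graph, and then realize the resulting edge swaps as a batch delete of the dropped compressed-tree edges and a batch insert of the surviving new edges in a batch-dynamic tree structure. The correctness reasoning via the cycle rule is also the one the paper uses.

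The genuine gap is that you have not actually established the one lemma everything hinges on, and you say so yourself: constructing the compressed path tree in $O(\bsize \log(1 + n/\bsize))$ work and polylogarithmic span. Your proposed route --- sort $S$ along an Euler tour, take range-LCAs of consecutive pairs to get the Steiner vertices, then answer a heaviest-edge query for each of the $O(\bsize)$ resulting paths --- gives the correct \emph{topology}, but issuing $O(\bsize)$ independent path-max queries against a dynamic tree costs $O(\bsize \log n)$, not $O(\bsize \log(1+n/\bsize))$, and the ``appropriate amortization or weight-balanced decomposition'' you invoke to close that factor is precisely the missing argument, not a routine detail. (You would also need the Euler tour and an LCA structure to be maintainable under the batch link/cut operations that your own update step performs, which you do not address.) The paper closes this gap differently: it maintains the MSF as a rake-compress (RC) tree and extracts the compressed path tree by a single top-down traversal of the RC tree (\textproc{ExpandCluster}), emitting a single weighted edge for any untouched binary cluster. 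The work bound then follows from a structural property of RC trees --- $\bsize$ root-to-leaf paths touch only $O(\bsize \log(1+n/\bsize))$ RC-tree nodes --- and the same RC tree serves as the batch-dynamic tree for the final insert/delete step, so no second data structure is needed. Until you supply an argument of comparable strength for your Euler-tour route, the stated work bound is asserted rather than proven.
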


We then use our batch-incremental MSF data structure to develop various data
structures for graph problems in a batch variant of the sliding window model. In
the sliding-window model~\cite{DGIM02}, one keeps a fixed-sized window with
updates adding new updates to the new side of the window and dropping them from
the old side. For incremental updates, each insertion on the new side of the
window does a deletion of the oldest element on the old side. In general, this
can be more difficult than pure incremental algorithms, but not as difficult as
supporting arbitrary deletion in fully dynamic algorithms. This setup has become
popular in modeling an infinite steam of data when there is only bounded memory,
and a desire to ``forget'' old updates in favor of newer ones. There have been
many dozens, perhaps hundreds, of papers using the model in general. Crouch et.
al.~\cite{crouch2013dynamic} have derived several algorithms for graph problems
in this model. The goal for graph algorithms is typically to use only
$\tilde{O}(n)$ memory.

Here we extend the model to allow for rounds of batch updates on the new side of the window, and batch removals from the old side.  Specifically, for graph algorithms we consider batch edge insertions on the new side and batch edge deletions on the far side.  Our results allow for arbitrary interleavings of batch insertions or deletions, and each of arbitrary size.  Matching up equal sized inserts and deletes, of course, gives a fixed sized window, but we do not require this.  Based on our batch-incremental MSF data structure, we are able to efficiently solve a variety of problems in the batch sliding-window model, including connectivity, $k$-certificate, bipartiteness, $(1 + \epsilon)$-MSF, and cycle-freeness. This uses an approach similar to the one of Crouch et. al.~\cite{crouch2013dynamic}, which is based on sequential incremental MSF.  However, beyond using the batch-incremental MSF data structures, we have to augment their data structures in several ways.  For example, since their focus was on memory and not on time, they did not did not discuss efficient implementations of many of their structures.  Since we are interested in parallelism, time is important in our results.   Our results are summarized by the following theorem:

\begin{theorem}
  \label{thm:sliding-window}
  There exist data structures for the batch sliding-window model (batch edge
  insertions on the new side and deletions on the old size) for the problems of
  maintaining connectivity, $k$-certificate, bipartiteness, $(1 +
  \epsilon)$-MSF, cycle-freeness, and $\vareps$-sparsifiers, that all require
  $\tilde{O}(n)$ memory, support batch updates of size $\bsize$ in
  $\tilde{O}(\bsize)$ work and polylogarithmic span, and queries (except for
  sparsifiers) in polylogarithmic work, where $n$ denotes the number of
  vertices.
\end{theorem}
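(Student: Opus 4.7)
\begin{proofsketch}
The plan is to adapt the sliding-window reduction of Crouch et al.~\cite{crouch2013dynamic} to the batch-parallel setting, with Theorem~\ref{thm:incremental-mst} as the workhorse. Weight every arriving edge by its timestamp and let $F$ be the \emph{maximum}-weight spanning forest of the stream so far (computed by applying Theorem~\ref{thm:incremental-mst} to the negated timestamps). A short cut-based argument shows that, for any window $[W_1, W_2]$, the restriction of $F$ to edges whose timestamps lie in the window has the same connected components, bipartite structure, and (up to a $(1 + \vareps)$ factor in the weighted case) the same spanning structure as the windowed graph. This reduces connectivity, bipartiteness, cycle-freeness, $k$-certificate, and $(1+\vareps)$-MSF to maintaining one or more max-weight spanning forests under incremental batch edge insertions. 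Because only MSF edges are stored, the memory bound is $\tilde{O}(n)$.

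Problem by problem I would proceed as follows. Cycle-freeness is immediate: the windowed graph is a forest iff no edge inserted since $W_1$ has been absorbed into $F$ by a swap. For connectivity and bipartiteness I would feed the maintained $F$ into a batch-parallel dynamic-tree structure (for instance an Euler-tour tree or RC-tree) that supports timestamp-range-restricted path queries, so that $(u,v)$ is connected in the window iff every edge on the $u$-$v$ path in $F$ has timestamp at least $W_1$; bipartiteness is handled by also tracking parity labels on tree edges and checking whether any in-window non-tree edge would close an odd cycle. For $k$-certificate I would run $k$ independent batch-incremental MSF instances, each on the residual stream after removing the previously extracted forests, following the classical iterated-MSF construction. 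For $(1+\vareps)$-MSF I would partition edge weights into $O(\vareps^{-1}\log U)$ geometric classes, where $U$ is the ratio of max to min weight, and maintain for each class a timestamp-weighted MSF via Theorem~\ref{thm:incremental-mst}; summing over classes then gives the approximate MST weight. For $\vareps$-sparsifiers I would combine the same geometric bucketing with a parallel incremental sparsifier per class, restricted to the current window.

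For a batch of $\bsize$ updates, each maintained MSF instance absorbs the insertions in $O(\bsize \log(1 + n/\bsize))$ expected work and $O(\log^2 n)$ span by Theorem~\ref{thm:incremental-mst}. Old-side batch deletions of size $\bsize$ affect only the MSF edges whose timestamps have expired, which can be located and removed by one parallel scan over $F$. Summing over the at most $k$ or $O(\vareps^{-1}\log U)$ MSF instances per application, and adding the polylogarithmic cost of the batch-parallel dynamic-tree operations, yields the claimed $\tilde{O}(\bsize)$ per-batch work, polylogarithmic span, and polylogarithmic query cost for the non-sparsifier problems.

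The hardest step will be the batch-parallel query machinery on top of the maintained MSF. Crouch et al.\ prove only space bounds and rely on sequential queries, whereas here we need timestamp-range-restricted path queries on a dynamic forest that support \emph{batches} of queries in polylogarithmic span and $\tilde{O}(\bsize)$ work. A secondary subtlety is the cascading of old-side deletions through the $k$ stacked MSFs of the $k$-certificate: I expect to argue that each expired edge triggers at most one swap per level and that all swaps at one level can be passed as a single batch insertion into the next level's MSF, so that the entire cascade still fits within the Theorem~\ref{thm:incremental-mst} bound.
\end{proofsketch}
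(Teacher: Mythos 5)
Your overall skeleton---recency weighting, a batch-incremental max-weight spanning forest via Theorem~\ref{thm:incremental-mst}, and the recent-edge property for in-window path queries---is exactly the paper's approach, and your treatments of connectivity, the stacked $k$-forest certificate, and the geometric bucketing for approximate MSF match the paper. However, two of your per-problem reductions have genuine gaps. For bipartiteness, you propose tracking parity labels and ``checking whether any in-window non-tree edge would close an odd cycle.'' The window can contain far more than $\tilde{O}(n)$ edges, and non-tree edges are precisely the ones the data structure discards, so there is nothing to check this against without blowing the space bound; moreover an odd cycle detected at insertion time may expire, so a flag does not suffice. The paper instead uses the cycle double cover reduction: $G$ is bipartite iff $D(G)$ has exactly twice as many components as $G$, which needs only two spanning forests and a component count. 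For cycle-freeness, your characterization (``no edge inserted since $W_1$ has been absorbed into $F$ by a swap'') keys on the timestamp of the \emph{inserting} edge, but the relevant timestamp is that of the \emph{evicted} edge: a new edge can swap out an already-expired edge, in which case the witnessed cycle is not inside the window. The paper's version maintains the order-$2$ forest decomposition and tests whether $F_2$ restricted to unexpired edges is empty.

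A related missing ingredient is the component-counting machinery that both of these reductions (and the approximate-MSF formula) rest on: the paper augments the connectivity structure with a parallel ordered-set of forest edges keyed by arrival time, so that \textproc{numComponents} is $n$ minus the number of unexpired forest edges in $O(1)$, and expirations are located by a \textproc{Split} in $O(\log n)$ work rather than your proposed ``parallel scan over $F$,'' which costs $O(n)$ work per expiration and breaks the $\tilde{O}(\bsize)$ bound. Finally, your sparsifier paragraph is essentially a placeholder; the paper's construction is substantially more involved (connectivity estimation via $O(\log^2 n)$ sampled subgraphs, $k$-certificates at geometric sampling probabilities with $k = O(\vareps^{-2}\log^3 n)$, and the Fung et al.\ sampling theorem), and ``a parallel incremental sparsifier per class'' does not address the central difficulty that an edge's connectivity is unknown at the time it must be sampled.
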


Finally, we note that we can also apply these techniques to the incremental setting, and, using existing
results on incremental graph connectivity~\cite{simsiri2016work}, obtain fast algorithms there as well.
Table~\ref{tab:summary_bounds} gives more specifics on the individual results and compares them to the
existing bounds for parallel dynamic graphs in the incremental and fully dynamic settings.
{
\renewcommand*{\arraystretch}{1.3}
\begin{table*}[t]
\scriptsize
\centering
  \begin{tabular}{l | p{0.25\textwidth} | p{0.25\textwidth} | p{0.25\textwidth} }
    \hline
  & Incremental (This paper)  & Sliding window (This paper) & Fully dynamic (Previous work) \\
  \hline
  Connectivity & $O(\bsize \alpha(n))^*$ (Previous work \cite{simsiri2016work}) & $O(\bsize \log(1 + n/\bsize))^*$ & $O(\bsize \log(n) \log(1 + n/\bsize))^{*, \dagger}$ \cite{acar2019parallel}  \\
  $k$-certificate & $O(k \bsize \alpha(n))^*$ & $O(k\bsize \log(1 + n/\bsize))^*$ & - \\
  Bipartiteness & $O(\bsize \alpha(n))^*$ & $O(\bsize\log(1+n/\bsize))^*$ & - \\
  Cycle-freeness & $O(\bsize \alpha(n))^*$ & $O(\bsize\log(1+n/\bsize))^*$ & - \\
  MSF & $O(\bsize\log(1+n/\bsize))^*$ & $O(\vareps^{-1}\bsize\log
      (n)\log(1+n/\bsize{}))^{*,\ddagger}$ & $O(\bsize n \log\log\log(n)\log(m/n))$ \cite{ferragina1994batch}  \\
  $\varepsilon$-sparsifier & $O(\vareps^{-2}\bsize\log^4 (n) \alpha(n))^*$ & $O(\vareps^{-2}\bsize\log^4 (n) \log(1+n/\bsize{}))^*$ & - \\
  \hline
  \end{tabular}
  \caption{Work bounds for new and known parallel batch-dynamic graph algorithms in the incremental (insert-only), sliding window, and fully dynamic settings. All algorithms run in $O(\polylog(n))$ span and use $O(n\ \polylog(n))$ space. $\bsize{}$ denotes the batch size of updates. Note that the algorithms in the sliding window model are also applicable to the incremental setting, by simply never moving the left endpoint of the window. For large batch sizes $\bsize{}$, these algorithms sometimes achieve better bounds. Some bounds given are randomized ($*$), amortized ($\dagger$), or give $(1+\vareps)$-approximate solutions ($\ddagger$)}\label{tab:summary_bounds}
\end{table*}
}

\subsection{Technical Overview}

The key ingredient in batch-incremental MSF data structure is a data structure for dynamically
producing a \emph{compressed path tree} of an input tree. Given a weighted
tree with some marked vertices, the compressed path tree with respect to the marked vertices is a minimal tree on the marked vertices and some additional ``Steiner vertices'' such that for every pair of marked vertices, the heaviest edge on the path between them is the same in the compressed tree as in the original tree. That is, the compressed path tree 
represents a summary of all possible pairwise heaviest edge queries on the marked vertices. An example of a compressed path tree is shown in Figure~\ref{fig:compressed_path_tree}. More formally, consider the subgraph consisting of the union of the paths between every pair of marked vertices. The compressed path tree is precisely this subgraph but with all of the non-important vertices of degree at most two spliced out. To produce the compressed path tree, we leverage some recent results on parallel batch-dynamic tree contraction and parallel rake-compress trees (RC trees) \cite{acar2020batch}.

\begin{figure}
  \centering
  \begin{subfigure}{0.45\columnwidth}
    \centering
    \includegraphics[width=0.90\textwidth]{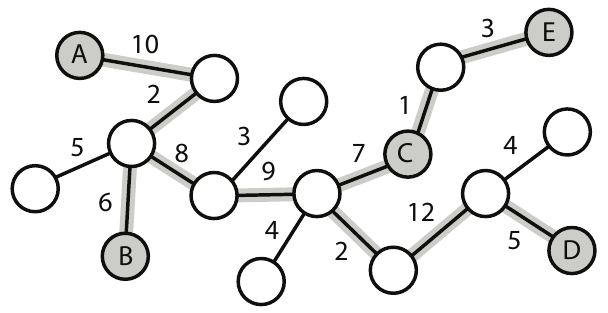}
    \caption{A weighted tree, with some important vertices marked (in gray). The paths between the important vertices are highlighted.}\label{subfig:uncompressed_tree}
  \end{subfigure}\hspace{0.05\columnwidth}
  \begin{subfigure}{0.45\columnwidth}
    \centering
    \includegraphics[width=0.90\textwidth]{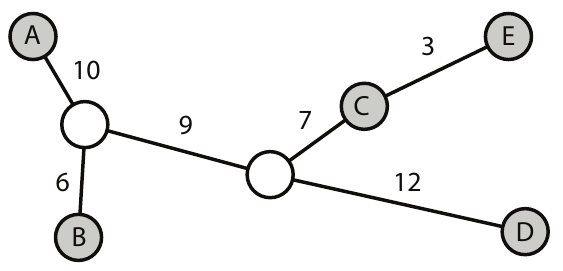}    
    \caption{The corresponding compressed path tree. The edges are weighted to represent the heaviest edge on the corresponding path.}\label{subfig:compressed_path_tree}
  \end{subfigure}
  \caption{A weighted tree and its corresponding compressed path tree with respect to some important vertices.}\label{fig:compressed_path_tree}
\end{figure}

Given a compressed path tree for each component of the graph, our algorithm follows from a generalization of the classic ``cycle rule'' (or ``red rule'') for MSTs, which states that the heaviest edge on a cycle in a graph can not be part of the MST. This fact is used to produce the efficient $O(\log(n))$ time solution
to the sequential incremental MSF problem \cite{sleator1983data}. Our compressed path tree technique generalizes this, in a sense, because it represents
all pairwise paths, and hence all possible cycles, between the newly inserted edges. In the batch setting, our algorithm takes the compressed
path trees and inserts the new batch of edges into them, and computes the MSF of the resulting graph.   For the MSF we can use the algorithm of Cole et. al.~\cite{CKT96}, which is linear work in expectation and logarithmic span w.h.p., which in turn is based on the linear time sequential algorithm~\cite{KKT95}.
Since the compressed path tree is of size $O(\bsize)$, this can be done efficiently. We then
show that the edges selected by this MSF can be added to the MSF of the main graph, and those that were not selected can be removed, resulting
in a correctly updated MSF.

Lastly, using a mix of known reductions and some new ideas, we show how our batch-incremental MSF algorithm
can be used to solve problems in a parallel version of the sliding-window graph streaming model.
A useful ingredient in this step is the \emph{recent edge} property \cite{crouch2013dynamic}, which
says that by weighting the edges of a graph stream with successively lower weights over time, connectivity
between a pair of vertices in the window can be tested by inspecting the heaviest (i.e. oldest) edge
on the path from $u$ to $v$ in an MSF of the graph so far. Combining this idea with the use of several
work-efficient parallel batch-dynamic data structures, we show how to maintain graph connectivity,
bipartiteness, approximate MSFs, $k$-certificates, cycle-freeness, and sparsifiers, subject to
batch updates in $O(\polylog(n))$ work and span per edge update, and $O(n \polylog(n))$ space.

\subsection{Related Work}

MSTs have a long and interesting history. The problem of \emph{dynamically} maintaining the MST under modifications to the underlying graph has been well studied. Spira and Pan \cite{spira1975finding} were the first to tackle the dynamic problem, and give an $O(n)$ sequential algorithm for vertex insertion that is based on Boruvka's algorithm. The first sublinear time algorithm for edge updates was given by Frederickson \cite{frederickson1985data}, who gave an $O(\sqrt{m})$ algorithm. A well-celebrated improvement to Frederickson's algorithm was given by Eppstein et. al \cite{eppstein1997sparsification}, who introduced the \emph{sparsification} technique to reduce the cost to $O(\sqrt{n})$. A great number of subsequent dynamic algorithms, including parallel ones, take advantage of Eppstein's sparsification. The sequential incremental MST problem, i.e., the problem of maintaining the MST subject to new edge insertions but no deletions, can be solved in $O(\log(n))$ time per update using dynamic trees \cite{sleator1983data,AHLT05} to find the heaviest weight edge on the cycle induced by the new edge and remove it.  Holm et. al. gave the first polylogarithmic time algorithm for fully dynamic MST~\cite{HLT01}, supporting updates in $O(\log^4 (n))$ amortized time per operation, and later improved by a $\log \log n$ factor~\cite{HRW15}, still amortized and also in expectation.  No worst case polylogarithmic time algorithm is known for the fully dynamic case.  This paper is concerned with algorithms for MSTs that are both parallel and dynamic. We review a range of existing such algorithms below.

\myparagraph{Parallel single-update algorithms} Pawagi and Ramakrishnan \cite{pawagi1986log} give a parallel algorithm for vertex insertion (with an arbitrary number of adjacent edges) and edge-weight updates in $O(\log(n))$ span but $O(n^2 \log(n))$ work. Varman and Doshi \cite{varman1986parallel,varman1988efficient} improve this to $O(n \log(n))$ work. Jung and Mehlhorn \cite{jung1988parallel} give an algorithm for vertex insertion in $O(\log(n))$ span, and $O(n)$ work. While this bound is optimal for dense insertions, i.e.\ inserting a vertex adjacent to $\Theta(n)$ edges, it is inefficient for sparse graphs.

Tsin \cite{tsin1988handling} show how to extend the work of Pawagi  and Ramakrishnan \cite{pawagi1986log} to handle vertex deletions in the same time bounds, thus giving a fully-vertex-dynamic algorithm parallel algorithm in $O(\log(n))$ span and $O(n^2 \log(n))$ work. Das and Ferragina \cite{das1994n} give algorithms for inserting and deleting edges in $O(\log(m/n)\log(n))$ span and $O(n^{2/3}\log(m/n))$ work. Subsequent improvements by Ferragina \cite{ferragina1995erew,ferragina1995technique}, and Das and Ferragina \cite{das1999erew} improve the span bound to $O(\log(n))$ with the same work bound.

\myparagraph{Parallel batch-dynamic algorithms} The above are all algorithms for single vertex or edge updates. To take better advantage of parallelism, some algorithms that process batch updates have been developed.
Pawagi \cite{pawagi1989parallel} gives an algorithm for batch vertex insertion that inserts $\bsize$ vertices in $O(\log(n)\log(\bsize))$ span and $O(n\bsize\log(n) \log(\bsize))$ work. Johnson and Metaxas \cite{johnson1992optimal} give an algorithm for the same problem with $O(\log(n)\log(\bsize))$ span and $O(n\bsize)$ work.

Pawagi and Kaser \cite{pawagi1993optimal} were the first to give parallel batch-dynamic algorithms for fully-dynamic MSTs. For inserting $\bsize$ independent vertices, inserting $\bsize$ edges, or decreasing the cost of $\bsize$ edges, their algorithms takes $O(\log(n)\log(\bsize))$ span and $O(n\bsize)$ work. Their algorithms for increasing the cost of or deleting $\bsize$ edges, or deleting a set of vertices with total degree $\bsize$ take $O(\log(n) + \log^2(\bsize))$ span and $O(n^2 (1 + \frac{\log^2(\bsize)}{\log(n)}))$ work. Shen and Liang \cite{shen1993parallel} give an algorithm that can insert $\bsize$ edges, modify $\bsize$ edges, or delete a vertex of degree $\bsize$ in $O(\log(n)\log(\bsize))$ span and $O(n^2)$ work. Ferragina and Luccio \cite{ferragina1994batch,ferragina1996three} give algorithms for handling $\bsize = O(n)$ edge insertions in $O(\log(n))$ span and $O(n \log\log\log(n) \log(m/n))$ work, and $\bsize$ edge updates in $O(\log(n)\log(m/n))$ span and $O(\bsize n \log\log\log(n) \log(m/n))$ work. Lastly, Das and Ferragina's algorithm \cite{das1994n} can be extended to the batch case to handle $\bsize$ edge insertions in $O((\bsize + \log(m/n)\log(n))$ span and $O(n^{2/3}(\bsize + \log(m/n)))$ work.

For a thorough and well written survey on the techniques used in many of the above algorithms, see Das and Ferragina~\cite{das1995parallel}.

\myparagraph{Sliding window dynamic graphs} Dynamic graphs in the sliding window model were studied by Crouch et. al~\cite{crouch2013dynamic}. In
the sliding window model, graphs consist of an infinite stream of edges $\langle e_1, e_2, ... \rangle$, and the goal of queries is to compute
some property of the graph over the edges $\langle e_{t-L+1}, e_{t-L+2}, \dots,
e_t \rangle$, where $t$ is the current time and and $L$ is the fixed length
of the window. Crouch et. al showed that several problems, including
$k$-connectivity, bipartiteness, sparsifiers, spanners, MSFs, and matchings, can be efficiently computed in this model. In particular, several of these results used a data structure for incremental MSF as a key ingredient.
All of these results assumed a sequential model of computation.


  \section{Preliminaries}

\subsection{Model of Computation}\label{subsec:model}
\myparagraph{Parallel Random Access Machine (\PRAM{})}
The parallel random access machine (\PRAM{}) model is a classic
parallel model with $p$ processors that work in lock-step, connected
by a parallel shared-memory~\cite{jaja1992introduction}. In this paper
we primarily consider the Concurrent-Read Concurrent-Write model
(\crcwpram{}), where memory locations are allowed to be concurrently
read and concurrently written to. If multiple writers write to the
same location concurrently, we assume that an arbitrary writer wins.
We analyze algorithms in terms of their \emph{work} and \emph{span},
where work is the total number of vertices in the thread DAG and where
span (also called depth) is the length of the longest path in the DAG
\cite{Blelloch96}.

\subsection{Tree Contraction and Rake-compress trees}

Tree contraction produces, from a given input tree, a set of smaller (contracted) trees,
each with a subset of the vertices from the previous one, until the final
layer which has a single remaining vertex. The original technique of Miller
and Reif \cite{miller1989parallel} produces a set of $O(\log(n))$ trees w.h.p,
with a geometrically decreasing number of vertices in each one. Specifically,
the technique of Miller and Reif involves sequential rounds of applying two
operations in parallel across every vertex of the tree, rake and compress.
The rake operation removes a leaf from the tree and merges it with its neighbor.
The compress operation
takes a vertex of degree two whose neighbors are not leaves and removes it,
merging the two adjacent edges. An important detail of Miller and Reif's
algorithm is that it operates on bounded-degree trees. Arbitrary degree
trees can easily be handled by converting them into equivalent bounded
degree trees, which can be done dynamically at no extra cost asymptotically
as is described in \cite{acar2020batch}.

A powerful application of tree contraction is that it can be used to
produce a recursive clustering of the given tree with attractive properties.
Using Miller and Reif's tree contraction, a recursive clustering can be 
produced that consists of $O(n)$ clusters, with recursive height $O(\log(n))$ w.h.p.
Such a clustering can be represented as a so-called
\emph{rake-compress tree} (RC tree) \cite{acar2005experimental}. Recent work 
has shown how to maintain a tree contraction dynamically subject to batch-dynamic
updates, work efficiently, and with low span \cite{acar2020batch}.
These results also facilitate maintaining RC trees subject to batch-dynamic updates
work-efficiently and in low span.
Specifically, an RC tree can be built in $O(n)$ work in expectation
and $O(\log^2(n))$ span w.h.p., and subsequently updated in $O(\bsize \log(1+n/\bsize))$ work
in expectation and $O(\log^2(n))$ span w.h.p.\ for batches of $\bsize$ edges.

\myparagraph{Rake-compress trees} RC trees encode a recursive clustering of a tree.
A cluster is defined to be a connected subset of vertices and edges of the original
tree. Importantly, it is possible for a cluster to contain an edge without containing
its endpoints. The \emph{boundary vertices} of a cluster $C$ are the vertices $v \notin C$
such that an edge $e \in C$ has $v$ as one of its endpoints. All of the clusters
in an RC tree have at most two boundary vertices. A cluster with no boundary vertices
is called a \emph{nullary cluster}, a cluster with one boundary is a
\emph{unary cluster} (corresponding to a rake)
and a cluster with two boundaries is \emph{binary cluster}
(corresponding to a compress). The root cluster is always
a nullary cluster. Nodes in an RC tree correspond to clusters, such that a node is always
the disjoint union of its children. The leaf clusters of the RC tree are the vertices and
edges of the original tree. Note that all non-leaf clusters have exactly one vertex (leaf)
cluster as a child. This vertex is that cluster's \emph{representative} vertex. Clusters
have the useful property that the constituent clusters of a parent cluster $C$ share
a single boundary vertex in common---the representative of $C$, and their remaining
boundary vertices become the boundary vertices of $C$.
See Figure~\ref{fig:rc-tree} for an example of a tree, a recursive clustering, and its
corresponding RC tree. Note that for a disconnected forest, the RC tree algorithm will simply
produce a separate root cluster for each component.

One of the great powers of RC trees is their ability to handle a multitude
of different kinds of queries~\cite{acar2005experimental}. In addition to subtree and
path queries, they can also facilitate many nonlocal queries, such as centers,
medians, and lowest common ancestors, all in $O(\log(n))$ time. In this paper, we will make use of
path queries, which allow us to find, given a pair of vertices $u$ and $v$,
the heaviest edge on the path from $u$ to $v$. We refer the reader to
\cite{acar2020batch} and \cite{acar2005experimental} for a more in-depth explanation
of RC trees and their properties.

\begin{figure}
  \centering
  \begin{subfigure}{0.45\columnwidth}
    \centering
    \includegraphics[width=\columnwidth]{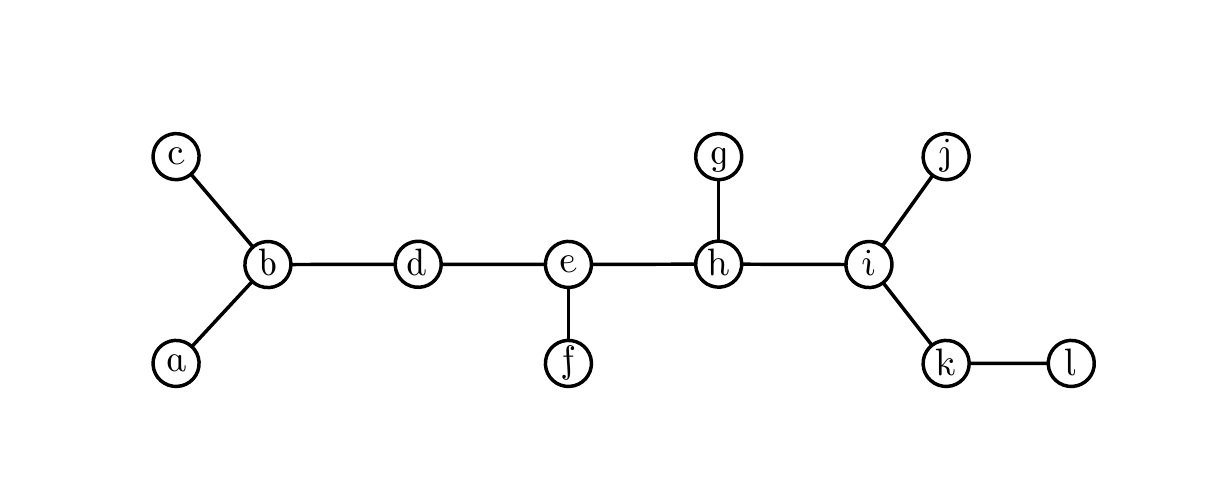}
    \caption{A tree}
  \end{subfigure}
  \begin{subfigure}{0.45\columnwidth}
    \centering
    \includegraphics[width=\columnwidth]{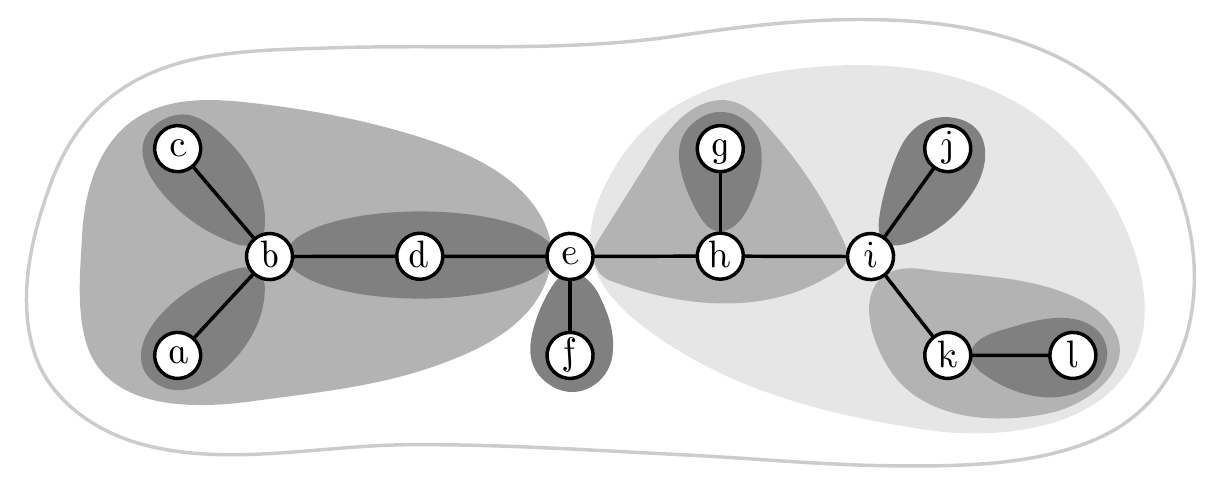}
    \caption{A recursive clustering of the tree produced by tree
      contraction. Clusters produced in earlier rounds are depicted in
      a darker color.}
  \end{subfigure}
  
  \bigskip
  
  \begin{subfigure}{0.9\columnwidth}
    \centering
    \includegraphics[width=0.8\columnwidth]{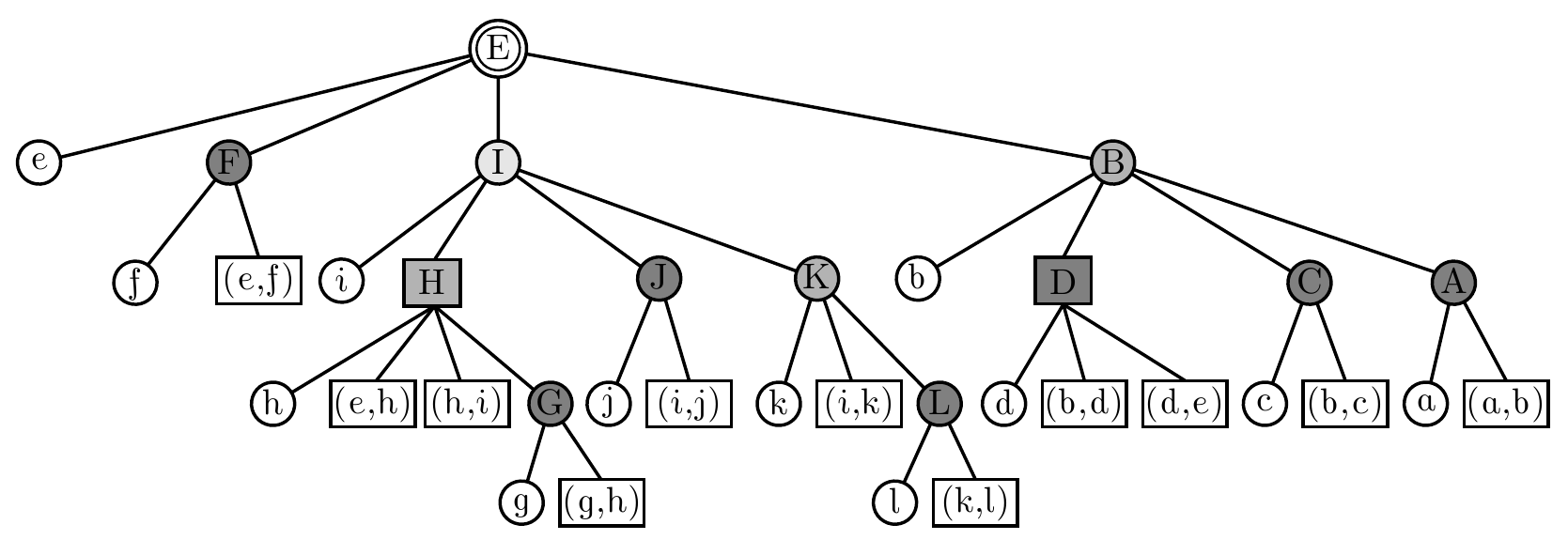}
    \caption{The corresponding RC tree. (Non-base) unary clusters are shown as circles, binary clusters as rectangles, and the finalize (nullary) cluster at the root with two concentric circles. The base clusters (the leaves) are labeled in lowercase, and the composite clusters are labeled with the uppercase of their representative.}
  \end{subfigure}  
  \caption{A tree, a clustering, and the corresponding RC tree \cite{acar2020batch}.}\label{fig:rc-tree}
\end{figure}

  \section{The Compressed Path Tree}

Given an RC tree of height $O(\log(n))$, our algorithm produces the compressed path tree with respect
to $\bsize{}$ marked vertices in $O(\bsize{}\log(1+n/\bsize{}))$ work and $O(\log^2(n))$ span. Note that the RC tree
has height $O(\log(n))$ w.h.p.

Broadly, our algorithm for producing the compressed path tree works as follows.
The algorithm begins by marking the clusters in the RC tree that contain a marked vertex, which
is achieved by a simple bottom-up traversal of the tree.   It then traverses the clusters
of the RC tree in a recursive top-down manner.   When the algorithm encounters a cluster that
contains no marked vertices, instead of recursing further, it can simply generate
a compressed representation of the contents of that cluster immediately. The algorithm uses
the following key recursive subroutine.

\newcommand{\algtype}[1]{{\itshape\bfseries{#1}}}

\begin{itemize}[leftmargin=12pt, topsep=1em]
  \item \textproc{ExpandCluster}($C$ : \algtype{Cluster}) : \algtype{Graph}
  
  Return the compressed path tree of the subgraph corresponding to the graph
  $C \cup \textproc{Boundary}(C)$, assuming that the boundary vertices of $C$ are
  marked.
\end{itemize}

\noindent We make use of the following primitives for interacting with the RC tree.
Each of them takes constant time.

\begin{itemize}[leftmargin=12pt, topsep=1em]
  \item \textproc{Boundary}($C$ : \algtype{Cluster}) : \algtype{vertex list}
  
  Given a cluster in the RC tree, return a list of its boundary vertices.
  
  \item \textproc{Children}($C$ : \algtype{Cluster}) : \algtype{Cluster list}
  
  Given a cluster in the RC tree, return a list of its child clusters.
  
  \item \textproc{Representative}($C$ : \algtype{Cluster}) : \algtype{vertex}
  
  Given a non-leaf cluster in the RC tree, return its representative vertex.
  
  \item \textproc{Weight}($B$: \algtype{BinaryCluster}) : \algtype{number}
    
  Given a binary cluster in the RC tree, return the weight of the heaviest edge
  on the path between its two boundary vertices. This quantity can be
  maintained by the RC tree and hence this takes constant time \cite{acar2020batch}.
  
\end{itemize}
  
\noindent Lastly, we use the following primitives for constructing the resulting
compressed path tree.
  
\begin{itemize}[leftmargin=12pt, topsep=1em]
  \item \textproc{SpliceOut}($G$ : \algtype{Graph}, $v$ : \algtype{vertex}) : \algtype{Graph}
  
  If $v$ has degree two in $G$ and is not marked, splice $v$ out by replacing its two incident
  edges with a contracted edge. The weight of the new edge is the heaviest of the two removed
  edges.
  
  \item \textproc{Prune}($G$ : \algtype{Graph}, $v$ : \algtype{vertex}) : \algtype{Graph}
  
  If $v$ has degree two in $G$, return \textproc{SpliceOut}($G$). Otherwise, if $v$ has degree
  one in $G$, with neighbor $u$, and is not marked, remove $v$ and the edge $(u,v)$, and
  return \textproc{SpliceOut}($G', u$), where $G'$ is the graph remaining after removing $v$
  and $(u,v)$.
\end{itemize}

\noindent The intuition behind the \textproc{Prune} primitive is that without it, our algorithm
would sometimes add redundant vertices to the compressed path tree. The proof of
Lemma~\ref{thm:compressed-path-tree-correctness} illuminates the reason for the precise
behavior of \textproc{Prune}. We give pseudocode for \textproc{ExpandCluster} in Algorithm~\ref{alg:compressed_path_tree}.
The compressed path tree of a marked tree is obtained by calling \textproc{ExpandCluster}(root),
where root is the root cluster of the correspondingly marked RC tree. For a disconnected forest,
\textproc{ExpandCluster} can simply be called on the root clusters of each component.

\begin{algorithm}[th]
  \caption{Compressed path tree algorithm}
  \label{alg:compressed_path_tree}
  \begin{algorithmic}[1]
    \State \alglinecomment{Returns a graph $G$, which is represented by a pair of sets $(V,E)$, where $V$ is the vertex set and $E$ is a set of weighted edges. Edges are represented as pairs, the first element of which is the set of endpoints of the edge, and the second of which is the weight}
    \Procedure{ExpandCluster}{$C$ : \algtype{Cluster}}: \algtype{Graph}
      \If{\algnot{} \textproc{Marked}($C$)}
        \State \alglocal{} $V$ \algassign{} \textproc{Boundary}($C$)
        \If{$C$ is a \algtype{BinaryCluster}}
          \State \alglocal{} $e$ \algassign{} $(V, \textproc{Weight}(C))$
          \State\label{code:non-marked-binary} \algreturn{} $(V, \{ e \})$
        \Else
          \State\label{code:non-marked-non-binary} \algreturn{} $(V, \{ \})$
        \EndIf
      \ElsIf{$C$ is a vertex $v$}
        \State\label{code:marked-single-vertex} \algreturn{} $(\{ v \}, \{\})$
      \Else
        \State\label{code:recurse} \alglocal{} $G$ \algassign{} $\bigcup_{c \in \textproc{Children}(C)} \textproc{ExpandCluster}(c)$
        \State\label{code:prune} \algreturn{} \textproc{Prune}($G$, \textproc{Representative}$(C)$)
      \EndIf
    \EndProcedure
  \end{algorithmic}
\end{algorithm}

\subsection{Analysis}

\myparagraph{Correctness} We first argue that our algorithm for producing
the compressed path tree is correct.
  
\begin{lemma}\label{thm:compressed-path-tree-correctness}
  Given a marked tree $T$ and its RC tree, for any cluster $C$, \textproc{ExpandCluster}($C$)
  returns the compressed path tree of the graph $C \cup \textproc{Boundary}(C)$, assuming that the
  boundary vertices of $C$ are marked.
\end{lemma}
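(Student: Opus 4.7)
My plan is to prove the lemma by induction on the height of cluster $C$ in the RC tree, using as inductive hypothesis the lemma statement itself applied to all strictly shorter clusters. The precondition that $\text{Boundary}(C)$ is marked is carried through the recursion by treating the representative $r = \text{Representative}(C)$ as virtually marked in each recursive call, since $r$ is a boundary vertex of every child of $C$.

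For the base case, $C$ is a leaf of the RC tree: either a single original edge or a single original vertex. For an edge-cluster $C = \{(u,v)\}$ we have $\text{Boundary}(C) = \{u,v\}$ and $C$ is binary; the algorithm returns the edge $(u,v,w(u,v))$, which is exactly the compressed path tree of the two marked vertices $u,v$ joined by that edge. A vertex leaf is handled either by the marked-single-vertex branch on line~\ref{code:marked-single-vertex} or, when unmarked, by falling through to the no-internal-marks branches analyzed next.

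For the inductive step I split on $\textproc{Marked}(C)$. If no vertex strictly inside $C$ is marked, then the only marked vertices of $C \cup \text{Boundary}(C)$ are the boundary vertices themselves, so the compressed path tree is determined entirely by heaviest edges on paths between boundary vertices. For binary $C$ with boundaries $\{u,v\}$ the single query is answered by $\textproc{Weight}(C)$; for unary or nullary $C$ no pair of marked vertices exists and the boundary-only graph is correct. If some internal vertex of $C$ is marked and $C$ is not a single vertex, we recurse on the children. By the structural property of RC trees, the children of $C$ share only $r$ and their remaining boundary vertices lie in $\text{Boundary}(C)$; so both kinds of child-boundary are marked (the non-$r$ ones by hypothesis on $C$, and $r$ virtually), and the induction hypothesis applies to each child.

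The key structural claim is that unioning the per-child compressed path trees produces the compressed path tree of $C \cup \text{Boundary}(C)$ with respect to the marked set enlarged by $r$: a path between two marked vertices lying entirely in one child is summarized correctly by that child's CPT, and a path crossing two children necessarily passes through $r$ and decomposes into two sub-paths whose heaviest edges are carried separately by the two children and glued at $r$. The last step is to eliminate $r$ from the output when $r$ was not originally marked, which is exactly the job of $\textproc{Prune}$. Because the RC tree operates on bounded-degree trees, each original edge incident to $r$ appears in the CPT of at most one child, so $r$ has degree at most two in the union; $\textproc{Prune}$'s splice-out case preserves the max of $r$'s two incident CPT-edge weights (which is precisely the heaviest original edge across $r$ between the two remaining endpoints), and its degree-one branch removes a tendril that no marked-to-marked path can traverse as an interior vertex. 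The main obstacle is precisely this union-and-prune analysis: one must verify carefully that treating $r$ as virtually marked in the recursion composes correctly across children, and that $\textproc{Prune}$'s few cases exhaustively realize the transition from the CPT with $r$ marked to the CPT with $r$ unmarked, which is why the specification of $\textproc{Prune}$ is tailored to simultaneously splicing out $r$ and, if $r$ was a tendril, the neighbor that $r$'s removal leaves as an unmarked degree-two vertex.
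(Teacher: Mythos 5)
Your proof follows essentially the same route as the paper's: structural induction on clusters, using the fact that the boundary vertices of the children of $C$ are exactly $\textproc{Boundary}(C)$ together with the representative $r$, so that the union of the children's outputs is the compressed path tree under the extra assumption that $r$ is marked, after which \textproc{Prune} discharges that assumption (including the cascade where deleting a degree-one $r$ leaves its neighbor at degree two). One assertion in your write-up is incorrect, although it does not sink the argument: $r$ need \emph{not} have degree at most two in the union $G$. After the bounded-degree conversion $r$ can have three incident edges, each lying in a different child cluster, and if all three corresponding subtrees contain marked vertices then $r$ has degree three in $G$; in that case $r$ is a legitimate Steiner (branching) vertex of the compressed path tree and must be kept, and correctness holds only because \textproc{Prune} does nothing when the degree is at least three --- a case your analysis rules out rather than handles, whereas the paper's case analysis treats it explicitly.
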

\begin{proof}
We proceed by structural induction on the clusters, with the inductive hypothesis
that \textproc{ExpandCluster}($C$) returns the compressed path tree for
the subgraph  $C \cup \textproc{Boundary}(C)$, assuming that, in addition to the
marked vertices of $T$, the boundary vertices of $C$ are marked. First, consider
an unmarked cluster $C$.
\begin{enumerate}[leftmargin=12pt]
  \item If $C$ is a \algtype{NullaryCluster}, then it has no boundary vertices, and
  since no vertices are marked, the compressed path tree should be empty. Line~\ref{code:non-marked-non-binary}
  therefore returns the correct result.
  \item If $C$ is a \algtype{UnaryCluster}, then it has as single marked boundary
  vertex and no other marked vertices. Therefore the compressed path tree consists
  of the just the boundary vertex, so Line~\ref{code:non-marked-non-binary} returns
  the correct result.
  \item If $C$ is a \algtype{BinaryCluster}, the compressed path tree contains
  its endpoints, and an edge between them annotated with the weight of the
  corresponding heaviest edge in the original tree. Line~\ref{code:non-marked-binary}
  returns this.
\end{enumerate}

\noindent Now, suppose that $C$ is a leaf cluster. Since edges cannot be marked, it must
be a cluster corresponding to a single vertex, $v$. Since $v$ is marked, the
compressed path tree just contains $v$, which is returned by Line~\ref{code:marked-single-vertex}.

We now consider the inductive case, where $C$ is a marked cluster that is not a leaf
of the RC tree. Recall the important facts that the boundary vertices of the children of
$C$ consist precisely of the boundary vertices of $C$ and the representative of $C$, and
that the disjoint union of the children of $C$ is $C$. Using these
two facts and the inductive hypothesis, the graph $G$ (Line~\ref{code:recurse})
is the compressed path tree of the graph $C \cup \textproc{Boundary}(C)$, assuming
that the boundary vertices of $C$ and the representative of $C$ are marked.

It now remains to prove that the \textproc{Prune} procedure (Line~\ref{code:prune})
gives the correct result, i.e., it should produce the compressed path tree without the
assumption that \textproc{Representative}($C$) is necessarily marked. Recall that the
compressed path tree is characterized by having no unmarked vertices of degree less
than three. If \textproc{Representative}($C$) is marked, or if \textproc{Representative}($C$)
has degree at least three, then \textproc{Prune} does nothing, which is correct.
Suppose \textproc{Representative}($C$) has degree two and is unmarked.
\textproc{Prune} will splice out this vertex and combine its adjacent edges.
Observe that splicing out a vertex does not change the degree of any
other vertex in the tree. By the inductive hypothesis, no other vertex of $G$
(Line~\ref{code:recurse}) was unmarked and had degree less than three, hence the
result of Line~\ref{code:prune} is the correct compressed path tree. Lastly, consider
the case where \textproc{Representative}($C$) has degree one and is not marked. \textproc{Prune}
will correctly remove it from the tree, but this will change the degree of its neighboring
vertex by one. If the neighbor was marked or had degree at least four, then it correctly
remains in the tree. If the neighbor had degree three and was not marked, then it will now have
degree two, and hence should be spliced out. As before, this does not change the degree of
any other vertex in the tree, and hence is correct. By the inductive hypothesis, the neighbor
cannot have had degree less than three and been unmarked before calling \textproc{Prune}. Therefore,
in all cases, Line~\ref{code:prune} returns the correct compressed path tree.

By induction on the clusters, we can conclude that the algorithm returns the
compressed path tree of the graph $C \cup \textproc{Boundary}(C)$, assuming that
the boundary vertices of $C$ are marked.
\end{proof}

\begin{theorem}
Given a marked tree $T$ and its RC tree, \textproc{ExpandCluster}(root), where root is the
root of the RC tree, produces the compressed path tree of $T$ with respect to the marked vertices.
\end{theorem}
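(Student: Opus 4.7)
The plan is to derive the theorem as a direct corollary of Lemma~\ref{thm:compressed-path-tree-correctness} by instantiating the inductive hypothesis at the root cluster and observing that the two quantifications in the lemma's statement collapse in this special case.

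First, I would recall two structural facts about RC trees that were established in the preliminaries: the root cluster of an RC tree is always a nullary cluster, so $\textproc{Boundary}(\text{root}) = \emptyset$; and every cluster in an RC tree is the disjoint union of its children, so by iterating this down to the base clusters, the root cluster consists of exactly the vertices and edges of the original tree~$T$. For a disconnected forest, the same argument applies component by component to each root cluster produced by tree contraction.

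Next, I would apply Lemma~\ref{thm:compressed-path-tree-correctness} with $C$ set to the root cluster. The lemma guarantees that $\textproc{ExpandCluster}(\text{root})$ returns the compressed path tree of $\text{root} \cup \textproc{Boundary}(\text{root})$, \emph{assuming} the boundary vertices of root are marked. Since $\textproc{Boundary}(\text{root}) = \emptyset$, this hypothesis is vacuous, and the graph in question simplifies to just $\text{root}$, which by the preceding paragraph equals $T$. Therefore the output is precisely the compressed path tree of $T$ with respect to the originally marked vertices.

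There is no substantial obstacle here: the bulk of the work was done in proving Lemma~\ref{thm:compressed-path-tree-correctness}, and the theorem is essentially a ``wrapper'' statement noting that the generic invariant maintained by \textproc{ExpandCluster} specializes correctly at the top of the recursion. The only minor point worth stating explicitly in the proof is why the root's emptiness of boundary both satisfies the lemma's precondition vacuously and eliminates the extra boundary vertices from the output graph; both follow from the fact that the root is nullary.
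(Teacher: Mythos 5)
Your proposal is correct and follows exactly the paper's own argument: apply Lemma~\ref{thm:compressed-path-tree-correctness} to the root cluster and note that, being nullary, it has no boundary vertices, so the marking assumption is vacuous and the cluster's contents are all of $T$. The paper states this in one sentence; you have merely spelled out the same reasoning in more detail.
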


\begin{proof}
This follows from Lemma~\ref{thm:compressed-path-tree-correctness} and the fact that
the root cluster is a nullary cluster and hence has no boundary vertices.
\end{proof}

\myparagraph{Efficiency} We now show that the compressed path tree can be computed
efficiently.

\begin{lemma}\label{lemma:cpt-is-small}
  A compressed path tree for $\bsize{}$ marked vertices has at most $O(\bsize{})$ vertices.
\end{lemma}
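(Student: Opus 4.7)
The plan is a short degree-counting argument. The key structural property, implicit in the definition of the compressed path tree and already used in the proof of Lemma~\ref{thm:compressed-path-tree-correctness}, is that every unmarked vertex in the compressed path tree has degree at least three. So every leaf and every degree-two vertex must be marked, which bounds the number of vertices of degree at most two by $\bsize{}$.

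To bound the number of vertices of degree at least three, I would apply the standard handshake identity for forests. For a forest on $N$ vertices with $c$ components, $\sum_v \deg(v) = 2(N-c)$, and combining with $\sum_v 1 = N$ gives
\[\sum_{d \geq 3}(d-2)\,n_d \;=\; n_1 + 2n_0 - 2c \;\leq\; n_1,\]
where $n_d$ counts vertices of degree $d$, and the last inequality uses $n_0 \leq c$ (each isolated vertex is its own component). Since every leaf is marked, $n_1 \leq \bsize{}$, and since $(d-2) \geq 1$ for $d \geq 3$, the number of vertices of degree at least three is at most $\bsize{}$.

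Combining the two counts, the compressed path tree has at most $2\bsize{} = O(\bsize{})$ vertices. I do not anticipate a real obstacle: the argument is purely combinatorial, and the only mild bookkeeping is to accommodate isolated marked vertices and multiple components (which arise when the input is a forest or when marked vertices fall in distinct components), absorbed cleanly by the inequality $n_0 \leq c$ above.
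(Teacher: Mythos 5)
Your proof is correct and follows essentially the same route as the paper's: both bound the leaves and degree-$\le 2$ vertices by $\bsize$ using the fact that unmarked vertices must have degree at least three, and then bound the branching vertices by the number of leaves. The only difference is that you make the final combinatorial step explicit via the handshake identity (carefully handling isolated vertices and multiple components), where the paper simply cites it as a standard fact about trees.
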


\begin{proof}
  Since a compressed path tree has no non-marked leaves, the compressed path tree
  has at most $\bsize{}$ leaves. Similarly, by definition, the compressed path tree has at
  most $\bsize{}$ internal nodes of degree at most two. The result then follows from the
  fact that a tree with $\bsize{}$ leaves and no internal nodes of degree less than two
  has $O(\bsize{})$ vertices.
\end{proof}

\begin{lemma}[\cite{acar2020batch}]\label{lem:root-to-leaf-paths}
Given the RC tree of a tree $T$ with $n$ vertices, $\bsize{}$ root-to-leaf paths in
the RC tree touch at most $O(\bsize{} \log(1+n/\bsize{}))$ nodes w.h.p.
\end{lemma}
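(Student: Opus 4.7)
The plan is to exploit two standard facts about Miller--Reif style RC trees: (i) the RC tree has height $O(\log n)$ w.h.p., and (ii) at each level $i$ of the RC tree (counting from the leaves), the number of surviving clusters is $O(n/c^i)$ w.h.p.\ for some absolute constant $c>1$, because each round of rake/compress removes a constant fraction of the remaining vertices. These two facts together give a quantitative ``width profile'' of the tree that will dominate the argument.

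Given this profile, I would bound the number of nodes touched at each level separately. At level $i$ the $\bsize$ root-to-leaf paths can touch at most $\min\bigl(\bsize,\, O(n/c^i)\bigr)$ distinct nodes, since both the number of paths and the total level population are upper bounds. Summing over the $O(\log n)$ levels,
\[
\sum_{i=0}^{O(\log n)} \min\!\bigl(\bsize,\, O(n/c^{i})\bigr),
\]
and splitting the sum at the threshold level $i^{\star}$ where $n/c^{i^{\star}} \approx \bsize$, the upper part (large $i$) forms a geometric series that sums to $O(\bsize)$, while the lower part (small $i$) contributes $\bsize$ per level across roughly $\log_c(n/\bsize)$ levels, totalling $O(\bsize \log(n/\bsize))$. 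Combining and folding in the additive $+1$ to cover the regime $\bsize \geq n$ (where the whole RC tree has only $O(n)$ nodes anyway) yields the claimed $O(\bsize \log(1+n/\bsize))$ bound.

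The only real subtlety will be justifying the level-size bound $|L_i| = O(n/c^i)$ \emph{w.h.p.}\ uniformly across all levels, rather than just in expectation at a single level. The standard resolution is to apply a Chernoff bound at each level together with a union bound over the $O(\log n)$ levels, using the fact that Miller--Reif contraction independently retains each degree-bounded vertex with probability at most a constant less than $1$. Since the height and the per-level shrinkage both hold w.h.p., the deterministic counting argument above then turns into a w.h.p.\ statement, giving the lemma. Everything else (the $\min$ bound per level and the geometric splitting of the sum) is a routine calculation.
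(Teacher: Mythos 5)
The paper does not actually prove this lemma---it is imported verbatim from \cite{acar2020batch}---so there is no in-paper proof to compare against; your argument is, however, essentially the standard one used in that cited work: bound the nodes touched at level $i$ by $\min(\bsize{}, |L_i|)$, use the geometric decay of level sizes, and split the sum at the level where $n/c^{i}\approx\bsize{}$. The structure is right and the calculation is routine, so I consider the proposal essentially correct. The one place you should be more careful is exactly the subtlety you flag. First, the survival indicators in Miller--Reif contraction are not independent (whether a vertex rakes or compresses depends on its neighbors' coins and degrees), so a plain Chernoff bound does not apply; you need a concentration inequality for sums with bounded dependency degree, or a round-by-round argument conditioning on the previous level. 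Second, the w.h.p.\ bound $|L_i|=O(n/c^i)$ cannot hold once the expected level size drops below $\Theta(\log n)$; the clean fix is to prove $|L_i|=O(\max(n/c^i,\log n))$ w.h.p.\ for all levels simultaneously, and then observe that the extra $O(\log^2 n)$ slack accumulated over $O(\log n)$ levels is harmless: in the regime where the refined bound actually improves on the trivial $\bsize{}\cdot O(\log n)$ height bound one has $\bsize{}=n^{1-o(1)}\gg\log^2 n$, and otherwise $\log(1+n/\bsize{})=\Theta(\log n)$ and the height bound alone already gives the claim. With those two repairs your outline is a complete proof.
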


\begin{theorem}\label{thm:compressed-path-tree-speed}
  Given a tree on $n$ vertices and its RC tree, the compressed path tree for $\bsize{}$ marked vertices
  can be produced in $O(\bsize{} \log(1+n/\bsize{}))$ work in expectation and $O(\log(n))$ span w.h.p. on the \crcwpram{}.
\end{theorem}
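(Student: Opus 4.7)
The plan is to split the analysis into three pieces: identifying which clusters must be traversed, bounding the work as a function of that count, and bounding the span by the depth of the RC tree.

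First, I would compute the \textproc{Marked}($\cdot$) predicate for every cluster in the RC tree by a bottom-up pass: a leaf cluster is marked iff its single vertex is one of the $\bsize$ marked vertices, and an internal cluster is marked iff at least one of its children is. The set of marked clusters is exactly the union of the root-to-leaf paths in the RC tree to the $\bsize$ marked leaves, so by Lemma~\ref{lem:root-to-leaf-paths} there are $O(\bsize \log(1+n/\bsize))$ marked clusters w.h.p. This bottom-up marking can be performed in $O(\bsize \log(1+n/\bsize))$ work and $O(\log n)$ span w.h.p.\ by traversing the paths in parallel (exactly the same cost as the subsequent top-down pass).

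Next, I would argue that \textproc{ExpandCluster} visits only marked clusters and their immediate unmarked children. By inspection of Algorithm~\ref{alg:compressed_path_tree}, an unmarked cluster returns a constant-sized graph in $O(1)$ work, while a marked non-leaf cluster recurses on its children, unions their results, and invokes \textproc{Prune} on the representative. Because RC trees operate on bounded-degree trees, every cluster has $O(1)$ children, every intermediate graph has $O(1)$ size, and hence both the union in Line~\ref{code:recurse} and the \textproc{Prune} in Line~\ref{code:prune} take $O(1)$ work. The total number of clusters visited is therefore at most a constant times the number of marked clusters, which is $O(\bsize \log(1+n/\bsize))$ w.h.p., giving the stated work bound. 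Because the bound from Lemma~\ref{lem:root-to-leaf-paths} is w.h.p., the work bound holds in expectation (the low-probability bad event contributes at most an $O(n)$ worst-case cost, absorbed into the expectation).

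For the span, I would exploit the fact that the RC tree has depth $O(\log n)$ w.h.p.\ and that all children of a cluster can be processed in parallel. Each recursive level performs only $O(1)$ local work per cluster, so the recursion contributes $O(\log n)$ span w.h.p. Combined with the $O(\log n)$ span of the bottom-up marking phase, the total span is $O(\log n)$ w.h.p.

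The only delicate step is arguing that the work done at each marked cluster really is $O(1)$; this requires the bounded-degree property of the underlying tree (and hence of each RC cluster's child list) together with the observation that the output graph returned by any recursive call has $O(1)$ vertices and edges, so that union and \textproc{Prune} do not blow up. Once that is established, plugging in Lemma~\ref{lem:root-to-leaf-paths} yields the theorem.
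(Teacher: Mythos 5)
Your overall structure matches the paper's proof: mark clusters bottom-up, charge the traversal to the marked root-to-leaf paths via Lemma~\ref{lem:root-to-leaf-paths}, and get the span from the $O(\log n)$ height of the RC tree. However, there is a genuine gap in your justification that Lines~\ref{code:recurse} and~\ref{code:prune} take constant time. You claim that ``the output graph returned by any recursive call has $O(1)$ vertices and edges,'' but this is false: only \emph{unmarked} clusters return constant-size graphs. For a marked cluster $C$, \textproc{ExpandCluster}($C$) returns the compressed path tree of $C \cup \textproc{Boundary}(C)$, whose size is proportional to the number of marked vertices contained in $C$ --- at the root it is the entire output, of size $\Theta(\bsize)$. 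Consequently, an eager implementation of the union in Line~\ref{code:recurse} costs time proportional to the sizes of the children's graphs, and summing this over all levels of the RC tree gives $O(\bsize \log n)$ total work, which exceeds the claimed $O(\bsize\log(1+n/\bsize))$ bound when $\bsize$ is large (e.g., $\bsize = \Theta(n)$ should give linear work).

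The paper closes this gap by \emph{not} materializing the intermediate graphs: the vertex-set union is performed lazily, by first running the recursion to identify the graphs produced at the base cases and then flattening them into a single set with one additional traversal, using the observation that the only duplicate vertex in a union of sibling clusters is the representative of their parent. Your treatment of \textproc{Prune} can also be repaired without the false size claim: since the RC tree operates on a bounded-degree version of the input tree, representing the edge set as an adjacency list lets \textproc{SpliceOut} and \textproc{Prune} touch only $O(1)$ entries regardless of how large the intermediate graph is. With those two fixes your argument goes through; the rest (the marking pass, charging unmarked children to their marked parents, and the span bound) is essentially identical to the paper's.
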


\begin{proof}
  The algorithm for producing the compressed path tree consists of two bottom-up traversals of the RC
  tree from the $\bsize{}$ marked vertices to mark and unmark the clusters, and a top-down traversal of the same paths
  in the tree. Non-marked paths in the RC tree are only visited if their parent is marked, and since the RC
  tree has constant degree, work performed here can be charged to the parent. Also due to the constant degree
  of the RC tree, at each node during the traversal, the algorithm performs a constant number of recursive calls.
  Assuming that Lines \ref{code:recurse} and \ref{code:prune} can be performed in constant time (to be shown),
  Lemma~\ref{lem:root-to-leaf-paths} implies the work bound of $O(\bsize{} \log(1 + n/\bsize{}))$ in expectation.
  
  To perform Line~\ref{code:recurse} in constant time, our algorithm can perform the set union of the vertex
  set lazily. That is, first run the algorithm to determine the sets of vertices generated by all of the base
  cases, and then flatten these into a single set by making another traversal of the tree. Duplicates can be
  avoided by noticing that the only duplicate in a union of clusters is the representative of their parent cluster.
  Line~\ref{code:prune} can be performed by maintaining the edge set as an adjacency list. Since the underlying tree is
  always converted to a bounded-degree equivalent by the RC tree, the adjacency list can be modified in constant time.
  
  The span bound follows from the fact that the RC tree has height $O(\log(n))$ w.h.p.\ and that
  each recursive call takes constant time.
  
  Lastly, note that this argument also holds for disconnected graphs by simply traversing each component (i.e.\ each root cluster)
  in parallel after the marking phase.
\end{proof}

  \section{Parallel Batch-incremental MSF}

Armed with the compressed path tree, our algorithm for batch-incremental MSF
is a natural generalization of the standard sequential algorithm. The sequential
algorithm for batch-incremental MSF consists in using a dynamic tree data
structure \cite{sleator1983data} to find the heaviest edge on the cycle created
by the newly inserted edge. The classic ``red rule'' says that this edge
can then safely be deleted to obtain the new MSF.

In the batch setting, when multiple new edges are added, many cycles may
be formed, but the same idea still applies. Broadly, our algorithm takes
the batch of edges and produces the compressed path trees with respect to
all of their endpoints. The key observation here is that the compressed
path trees will represent all of the possible paths between the edge endpoints,
and hence, all possible cycles that could be formed by their inclusion. Finding
the new edges of the MSF is then a matter of computing the MSF of the compressed
path trees with the newly inserted edges, and taking those that made it in. The
edges to be removed from the MSF are those corresponding to the compressed path
tree edges that were not part of the MSF.

We express the algorithm in pseudocode in Algorithm~\ref{alg:incremental-mst}.
It takes as input, an RC tree of the current MSF, and the new batch of edges
to insert, and modifies the RC tree to represent the new MSF. The subroutine
\textproc{CompressedPathTrees} computes the compressed path trees for all
components containing a marked vertex in $K$, using Algorithm~\ref{alg:compressed_path_tree}.

\begin{algorithm}[H]
  \caption{Batch-incremental MSF}
  \label{alg:incremental-mst}
  \begin{algorithmic}[1]
    \Procedure{BatchInsert}{RC : \algtype{RCTree}, $E^+$ : \algtype{edge list}}
      \State\label{code:collect-endpoints} \alglocal{} $K$ \algassign{} $\bigcup_{(u,v) \in E^+} \{u,v\}$
      \State\label{code:get-cpt} \alglocal{} $C$ \algassign{} \textproc{CompressedPathTrees}(RC, $K$)
      \State\label{code:compute-mst} \alglocal{} $M$ \algassign{} \textproc{MSF}($C \cup\ E^+$)
      \State\label{code:rc-delete} RC.\textproc{BatchDelete}($E(C) \setminus E(M)$) 
      \State\label{code:rc-insert} RC.\textproc{BatchInsert}($E(M) \cap E^+$)
    \EndProcedure
  \end{algorithmic}
\end{algorithm}

\subsection{Analysis}

\myparagraph{Correctness} We first argue that our algorithm for updating the MSF is correct. We will invoke a classic staple of MST algorithms and their analysis, the ``cycle rule'', also called the ``red rule'' by Tarjan.

\begin{lemma}[Red rule \cite{tarjan1983data}]
  For any cycle $C$ in a graph, and a heaviest edge $e$ on that cycle, there exists a minimum spanning forest of $G$ not containing $e$.
\end{lemma}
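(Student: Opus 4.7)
The plan is a standard exchange argument: starting from any MSF that contains the alleged heaviest cycle edge, swap in a cheaper (or equal-weight) cycle edge to produce another MSF that avoids $e$.

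First I would fix an arbitrary MSF $F$ of $G$. If $e \notin F$, we are done, so suppose $e \in F$. Removing $e$ from $F$ breaks the tree of $F$ containing $e$ into two subtrees $T_1, T_2$ with $e$'s endpoints $u \in T_1$, $v \in T_2$; the remaining trees of $F$ are untouched, so $F \setminus \{e\}$ is still a forest with exactly one more connected component than $F$. Let $P$ denote the path $C \setminus \{e\}$ from $u$ to $v$ along the cycle.

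Next I would argue that $P$ must contain some edge $e' = (x,y)$ with $x \in T_1$ and $y \in T_2$. This is the main (easy) combinatorial step: walk along $P$ from $u$ to $v$ and consider the sequence of subtrees of $F \setminus \{e\}$ to which each successive vertex belongs; since the walk starts in $T_1$ and ends in $T_2$, at some step the containing subtree must change, and in fact change between $T_1$ and $T_2$ (because any step out of $T_1 \cup T_2$ into a third tree would leave us stranded, as $F \setminus \{e\}$ has no edges back from a third component to $T_1$ or $T_2$ — the cleanest way to see this is to observe that $P \cup \{e\}$ is a cycle, so connecting $T_1$ and $T_2$ via $P$ must use at least one edge whose endpoints lie in different components of $F \setminus \{e\}$, and the only way for such an edge's endpoints to be in different components among those reachable by $P$ from $u$ to $v$ is for them to lie in $T_1$ and $T_2$ respectively).

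Now set $F' = (F \setminus \{e\}) \cup \{e'\}$. Since $e'$ reconnects $T_1$ and $T_2$ without creating a cycle, $F'$ is again a spanning forest of $G$ with the same vertex-partition into trees as $F$. Because $e$ is a heaviest edge of $C$ and $e' \in C$, we have $w(e') \le w(e)$, hence $w(F') \le w(F)$. Since $F$ is minimum, equality holds, so $F'$ is an MSF of $G$; by construction it does not contain $e$.

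The only nontrivial point is the existence of the crossing edge $e'$, and the hardest part is writing it crisply without getting bogged down in case analysis on the other components of $F$. I would handle this by phrasing it as: the endpoints of $e$ are separated in $F \setminus \{e\}$, so any $u$-to-$v$ walk in $G$ (in particular $P$) must traverse some edge whose endpoints lie in different components, and the very first such edge along $P$ necessarily has one endpoint in $T_1$ (the component of $u$) and the other in some different component, which must be $T_2$ since adding that edge to $F \setminus \{e\}$ together with the remainder of $P$ and $e$ closes the cycle $C$.
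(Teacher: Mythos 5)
The paper does not actually prove this lemma; it is stated as a known fact with a citation to Tarjan, so there is no in-paper argument to compare against. Your proof is the standard exchange argument, and its overall structure and conclusion are sound: remove $e$ from an MSF containing it, find a crossing edge $e'$ of the cycle, swap it in, and use $w(e') \le w(e)$ to conclude the result is still minimum.

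The one step you yourself flag as nontrivial --- the existence of an edge $e'$ of $P$ with one endpoint in $T_1$ and one in $T_2$ --- is true, but neither of the justifications you offer actually establishes it. The claim that a step into a third tree ``would leave us stranded, as $F \setminus \{e\}$ has no edges back from a third component'' conflates edges of $F$ with edges of $G$: the path $P$ walks along edges of $G$, so nothing in that sentence prevents it from visiting a third component of $F \setminus \{e\}$ and coming back. Your second formulation (``the only way for such an edge's endpoints to be in different components \dots is for them to lie in $T_1$ and $T_2$'') and the closing remark (``which must be $T_2$ since adding that edge \dots closes the cycle $C$'') assert the desired conclusion rather than derive it: the first crossing edge of $P$ certainly leaves $T_1$, but you give no reason it cannot land in a third component. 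The clean fix is one line and removes the case analysis entirely: the cycle $C$ is connected in $G$, and $F$ is a \emph{spanning} (maximal) forest, so all vertices of $C$ lie in a single tree $T$ of $F$; deleting $e$ splits $T$ into exactly $T_1$ and $T_2$, hence every vertex of $P$ lies in $T_1 \cup T_2$, and the first edge of $P$ at which the component label changes goes directly from $T_1$ to $T_2$. With that substitution, the exchange $F' = (F \setminus \{e\}) \cup \{e'\}$, the weight comparison, and the conclusion all go through exactly as you wrote them.
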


\begin{theorem}\label{thm:incremental-mst-correctness}
  Let $G$ be a connected graph. Given a set of edges $E^+$, let $C$ be the compressed path tree of $G$
  with respect to the endpoints of $E^+$, and let $M$ be the MST of $C \cup E^+$. Then a valid MST of $G \cup E^+$ is
  \begin{equation*}
  M' = \textproc{MST}(G) \cup (E(M) \cap E^+) \setminus (E(C) \setminus E(M)),
  \end{equation*}
  where the edges of $C$ are identified with their corresponding heaviest edges in $G$ whose weight they are labeled with.
\end{theorem}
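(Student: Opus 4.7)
My plan is to prove the statement in three stages: a reduction, a weight-preserving correspondence, and then a matching of MST-construction decisions.

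First, I would reduce to the simpler problem of computing $\textnormal{MST}(H)$ where $H = \textnormal{MST}(G)\cup E^+$. The identity $\textnormal{MST}(G\cup E^+) = \textnormal{MST}(H)$ is a routine consequence of the red rule: any edge $e\in G\setminus \textnormal{MST}(G)$ is, by definition, a heaviest edge on some cycle of $G$, and that cycle persists in $G\cup E^+$, so $e$ lies in no MST of $G\cup E^+$. Hence discarding all such edges at the outset is safe, and the ``interior'' edges of $\textnormal{MST}(G)$ not touched by any update can be ignored for the rest of the argument.

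Second, I would formalize the correspondence between $C$ and $\textnormal{MST}(G)$ that is implicit in Lemma~\ref{thm:compressed-path-tree-correctness}. Each edge $e$ of $C$ represents an edge-disjoint sub-path of $\textnormal{MST}(G)$ obtained by splicing out degree-two, unmarked vertices, and its weight equals the maximum weight along that sub-path. Define $\phi(e)$ to be the heaviest edge of this sub-path (unique under distinct-weight tie-breaking). The key properties I would verify are: (i) $\phi$ is injective because the sub-paths are edge-disjoint; (ii) for every pair of endpoints $u,v$ appearing in $E^+$, the heaviest edge on the $u$--$v$ path in $C$ has the same weight as the heaviest edge on the $u$--$v$ path in $\textnormal{MST}(G)$; and therefore (iii) the map sending each simple cycle of $C\cup E^+$ to the corresponding cycle of $H$ (by expanding each $C$-edge back into its sub-path) preserves the identity and weight of the heaviest edge.

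Third, I would conclude by running Kruskal's algorithm mentally on $H$ and showing its decisions are mirrored exactly in $M=\textnormal{MST}(C\cup E^+)$. An edge $(u,v)\in E^+$ is included in $\textnormal{MST}(H)$ iff it is lighter than the heaviest edge on the $u$--$v$ path in $\textnormal{MST}(G)$, equivalently in $C$, equivalently iff $(u,v)\in M$. An edge $\phi(e)\in \textnormal{MST}(G)$ is discarded from $\textnormal{MST}(H)$ iff it is the heaviest edge on some cycle of $H$ created by $E^+$, equivalently (via the cycle correspondence) iff $e$ is the heaviest on the corresponding cycle of $C\cup E^+$, equivalently iff $e\notin M$. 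Edges of $\textnormal{MST}(G)$ outside the image of $\phi$ are not on any cycle of $H$ involving $E^+$ and therefore remain. Assembling these three cases yields exactly $M' = \textnormal{MST}(G)\cup (E(M)\cap E^+)\setminus(E(C)\setminus E(M))$ under the identification of $C$-edges with their $\phi$-images; a sanity-check on cardinality, using $|E(M)| = |E(C)|$ (both equal $|V(C)|-1$ per component), confirms $|M'|=|V(G)|-1$, matching the size of a spanning tree of the connected graph $G\cup E^+$.

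The main obstacle is the second stage: faithfully translating cycles of $C\cup E^+$ into cycles of $H$ while preserving the identity and weight of the heaviest edge. Once that weight-preserving correspondence is rigorously established, both the red-rule side (justifying removals) and the blue-rule side (justifying insertions) of the third stage are essentially automatic. A subtlety to handle carefully is the treatment of the Steiner vertices of $C$, which have no counterpart in $\textnormal{MST}(G)$ but are exactly the branching points where multiple $C$-edge sub-paths meet; these must be threaded correctly when lifting a $C$-cycle back to a cycle in $H$.
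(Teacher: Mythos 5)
Your overall architecture is genuinely different from the paper's and, once repaired, works. The paper argues in one direction only: it shows each edge of $(E(C)\setminus E(M))\cup(E^+\setminus E(M))$ is a heaviest edge on a cycle of $G\cup E^+$ (by lifting the fundamental cycle of that edge with respect to $M$), removes these edges by the red rule, and then establishes acyclicity of $M'$ by a separate cut argument. You instead reduce to $H=\textnormal{MST}(G)\cup E^+$, build a bidirectional, heaviest-edge-preserving correspondence between cycles of $C\cup E^+$ and cycles of $H$, and match the two MSTs edge for edge; this proves the stronger claim $M'=\textnormal{MST}(H)$ and dispenses with a separate acyclicity step (your cardinality check is then a redundant but harmless sanity test). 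The burden you take on in exchange is exactly the one you identify as the main obstacle: you must show that every simple cycle of $H$ decomposes into full subpaths of $C$-edges joined at vertices of $C$. This does hold, because every maximal tree segment of such a cycle joins two endpoints of $E^+$ edges and hence lies in the Steiner subtree, whose unmarked degree-two vertices are interior to exactly one subpath; it deserves an explicit argument since it is the step with no counterpart in the paper.

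However, two of the equivalences you assert in stage three are false as stated and must be replaced. First, ``$(u,v)\in E^+$ is included in $\textnormal{MST}(H)$ iff it is lighter than the heaviest edge on the $u$--$v$ path in $\textnormal{MST}(G)$'' is the single-insertion criterion and fails for batches: two parallel new edges, both lighter than the tree path between their endpoints, would both pass this test, yet at most one can be in the MST. The correct statement, which your stage-two correspondence delivers directly, is that $f\in E^+$ lies in $\textnormal{MST}(H)$ iff $f$ is not a heaviest edge on any cycle of $H$, iff $f$ is not a heaviest edge on any cycle of $C\cup E^+$, iff $f\in M$. Second, edges of $\textnormal{MST}(G)$ outside the image of $\phi$ are not all ``on no cycle of $H$ involving $E^+$'': an edge interior to the subpath of some $C$-edge $e$ but lighter than $\phi(e)$ does lie on cycles of $H$ whenever $e$ lies on a cycle of $C\cup E^+$. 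It survives for a different reason: any cycle of $H$ containing it traverses the whole subpath and hence also contains the strictly heavier edge $\phi(e)$, so it is never a heaviest cycle edge. Both repairs use only machinery you have already set up, so the proof goes through; you should also state up front that you assume distinct weights (or a fixed tie-breaking), since the ``in the MST iff never heaviest on a cycle'' characterization you rely on throughout requires it, whereas the theorem only asks for some valid MST.
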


\begin{proof}
   First, observe by some simple set algebra that since $E(M) \cap E^+ = E^+ \setminus (E^+ \setminus E(M))$, we have
   \begin{equation}
   M' = (\textproc{MST}(G) \cup E^+) \setminus (E(C) \setminus E(M)) \setminus (E^+ \setminus E(M)).
   \end{equation}
   We will prove the result using the following strategy. We will begin with the graph $\textproc{MST}(G) \cup E^+$,
   and then show, using the cycle rule, that we can remove all of the edges in $E(C) \setminus E(M)$ and
   $E^+ \setminus E(M)$ and still contain a valid MST. We will then finally show that $M'$ contains no cycles
   and hence is an MST.
   
   Let $e = (u,v)$ be an edge in $E(C) \setminus E(M)$. We want to show that $e$ is a heaviest edge
   on a cycle in $C \cup E^+$. To do so, consider the cycle formed by inserting $e$ into $M$.
   If $e$ was not a heaviest edge on the cycle, then we could replace the heavier edge with $e$
   in $M$ and reduce its weight, which would contradict $M$ being an MST. Therefore $e$
   is a heaviest edge on a cycle in $C \cup E^+$. Since every edge in $C$
   represents a corresponding heaviest edge on a path in $G$, $e$ must also correspond to a heaviest edge
   on the corresponding cycle in $G \cup E^+$. Since $e$ is a heaviest edge on some cycle of
   $G \cup E^+$, the cycle rule says that it can be safely removed. Since we never remove an
   edge in $M$, the graph remains connected, and hence we can continue to apply this argument
   to remove every edge in $E(C) \setminus E(M)$, as desired.
   
   The exact same argument also shows that we can remove all of the edges in $E^+ \setminus E(M)$,
   and hence, we can conclude that $M'$ contains an MST. It remains to show, lastly, that $M'$ contains
   no cycles. First, we observe that since $\textproc{MST}(G)$ contains no cycles, any cycles in $M'$
   must contain an edge from $E^+$. It therefore suffices to search for cycles in $C \cup E^+$ since
   $C$ consists of all possible paths between the endpoints of $E^+$. We will show that if a pair
   of edges both cross the same cut of $C \cup E^+$, that at least one
   of them must be contained in $E(C) \setminus E(M)$ or $E^+ \setminus E(M)$.
   
   First, we note that if an edge is contained in $E(C) \setminus E(M)$ or $E^+ \setminus E(M)$, then
   it is contained in $(E(C) \cup E^+) \setminus E(M)$. Suppose there exists two edges that
   cross the same cut of $C \cup E^+$. Since $M$ is
   a tree, it can not have two edges crossing the same cut, and therefore at least one of them
   can not be in $M$, and hence one of them is contained in $(E(C) \cup E^+) \setminus E(M)$. Therefore
   for any pair of edges crossing the same cut, at least one of them is not contained in $M'$, and hence
   every cut of $M'$ has at most one edge crossing it. This implies that $M'$ does not contain any cycles.
   Combined with the fact that $M'$ contains an MST, we can conclude that $M'$ is indeed an MST of $G \cup E^+$
   as desired.
\end{proof}

\begin{corollary}
  Algorithm~\ref{alg:incremental-mst} correctly updates the MSF.
\end{corollary}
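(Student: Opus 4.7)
The plan is to derive the corollary directly from Theorem~\ref{thm:incremental-mst-correctness}, with two extensions: verifying that the algorithm's modifications to the RC tree match the edge set $M'$ in that theorem, and lifting the statement from a connected graph with an MST to a general (possibly disconnected) graph with an MSF.

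First I would observe that the RC tree maintained by the algorithm encodes the MSF explicitly as an edge set, so after the calls on Lines~\ref{code:rc-delete} and~\ref{code:rc-insert} it represents exactly
\begin{equation*}
\textproc{MSF}(G) \;\cup\; (E(M) \cap E^+) \;\setminus\; (E(C) \setminus E(M)),
\end{equation*}
which matches the formula for $M'$ in Theorem~\ref{thm:incremental-mst-correctness}. Note that the set differences here use the identification (already present in the statement of that theorem) of each edge in $C$ with the corresponding heaviest edge of $G$ whose weight it carries; the algorithm can be assumed to maintain this correspondence while building $C$, so that the deletions on Line~\ref{code:rc-delete} target the correct edges of the stored MSF.

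Next I would extend Theorem~\ref{thm:incremental-mst-correctness} from the connected MST case to the MSF case. Since \textproc{CompressedPathTrees} is applied component-wise to each connected component of the current MSF that contains an endpoint in $K$, I would treat the resulting graph $G \cup E^+$ as a disjoint union of sub-instances, one per ``supercomponent'' formed by taking a union-find of the components joined by edges of $E^+$. Within each supercomponent, the edges of $E^+$ either close a cycle inside a single component of $G$ (handled exactly by Theorem~\ref{thm:incremental-mst-correctness}) or serve as bridges connecting two previously distinct components. Bridges in $C \cup E^+$ must appear in $M$, so they land in $E(M) \cap E^+$ and are inserted into the RC tree, correctly merging the components; no edge of $\textproc{MSF}(G)$ within these components is ever incorrectly removed because such edges are not represented in $E(C) \setminus E(M)$ unless they are heaviest on a genuine cycle. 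Components untouched by $E^+$ contribute no marked vertices, are not visited by the algorithm, and remain unchanged, which is correct.

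The main obstacle I expect is the component-merging bookkeeping, because Theorem~\ref{thm:incremental-mst-correctness} is stated only for a single connected $G$. The cleanest way to handle this is to verify directly that the final edge set is (i) acyclic and (ii) a spanning forest of $G \cup E^+$ of minimum total weight. Acyclicity and the cut-uniqueness argument from the end of the proof of Theorem~\ref{thm:incremental-mst-correctness} carry over verbatim once we work inside each supercomponent of $G \cup E^+$, since any cycle in $M'$ must still use an edge of $E^+$ and hence lie entirely inside a single supercomponent where $C$ already captures every relevant path. Minimality follows by applying the red rule along the cycles formed by the edges in $E(C) \setminus E(M)$ and $E^+ \setminus E(M)$ exactly as in the theorem. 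Together these give the corollary.
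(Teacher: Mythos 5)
Your proposal is correct and follows essentially the same route as the paper: the paper's own proof is simply that Theorem~\ref{thm:incremental-mst-correctness} handles the connected case and that the same argument applies per component, with newly joined components being connected by the MSF $M$. You work out the component-merging bookkeeping (supercomponents, bridge edges forced into $M$, untouched components left alone) in considerably more detail than the paper does, which is a faithful elaboration rather than a different approach.
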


\begin{proof}
Theorem~\ref{thm:incremental-mst-correctness} shows that the algorithm is correct for connected graphs. For
disconnected graphs, we simply apply the same argument for each component, and observe that the previously
disconnected components that become connected will be connected by an MSF.
\end{proof}

\myparagraph{Efficiency} We now show that the batch-incremental MSF algorithm achieves low work and span.

\begin{theorem}
Batch insertion of $\bsize{}$ edges into an MSF using Algorithm~\ref{alg:incremental-mst} takes $O(\bsize{}\log(1+n/\bsize{}))$ work
in expectation and $O(\log^2(n))$ span w.h.p.
\end{theorem}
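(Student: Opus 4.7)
The plan is to bound the cost of each of the five lines of Algorithm~\ref{alg:incremental-mst} separately and observe that the dominant terms already match the claimed bound. First, collecting the endpoint set $K$ (Line~\ref{code:collect-endpoints}) is a trivial parallel map/deduplication on $\bsize$ edges, costing $O(\bsize)$ work and $O(\log \bsize)$ span.

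Next, for Line~\ref{code:get-cpt}, the set $K$ has at most $2\bsize$ vertices, so Theorem~\ref{thm:compressed-path-tree-speed} immediately yields $O(\bsize \log(1+n/\bsize))$ expected work and $O(\log n)$ span w.h.p.\ for computing the compressed path trees (one per component, executed in parallel over the root clusters). For Line~\ref{code:compute-mst}, Lemma~\ref{lemma:cpt-is-small} gives $|V(C)|, |E(C)| = O(\bsize)$, so the graph $C \cup E^+$ has $O(\bsize)$ vertices and $O(\bsize)$ edges; invoking the randomized parallel MSF algorithm of Cole, Klein, and Tarjan~\cite{CKT96} cited in the introduction yields $O(\bsize)$ expected work and $O(\log n)$ span w.h.p.

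Finally, Lines~\ref{code:rc-delete} and~\ref{code:rc-insert} call the batch-dynamic RC tree update primitives. The edge sets involved have size $O(\bsize)$: $|E(C) \setminus E(M)| \leq |E(C)| = O(\bsize)$ by Lemma~\ref{lemma:cpt-is-small}, and $|E(M) \cap E^+| \leq \bsize$. The RC tree bounds recalled in Section~2.2 (from~\cite{acar2020batch}) then give $O(\bsize \log(1+n/\bsize))$ expected work and $O(\log^2 n)$ span w.h.p.\ for each of these two calls. Summing the five bounds, the work is $O(\bsize \log(1+n/\bsize))$ in expectation and the span is $O(\log^2 n)$ w.h.p., dominated by the RC tree updates.

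The only subtle point is to ensure that the ``edges of $C$'' passed to RC.\textproc{BatchDelete} are interpreted as the actual tree edges in the current MSF that they represent (not the compressed edges themselves), and symmetrically that the MSF stored in the RC tree is correctly kept in sync with the output of Theorem~\ref{thm:incremental-mst-correctness}; this bookkeeping adds only $O(1)$ overhead per edge since each compressed edge carries a pointer to its original heaviest representative, which the compressed path tree construction can record at no asymptotic cost. No step is a real obstacle: the theorem is essentially an accounting exercise given Theorem~\ref{thm:compressed-path-tree-speed}, Lemma~\ref{lemma:cpt-is-small}, and the black-box RC tree and static MSF bounds.
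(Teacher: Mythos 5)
Your proposal is correct and follows essentially the same line-by-line accounting as the paper's own proof: semisort/deduplicate the endpoints, apply Theorem~\ref{thm:compressed-path-tree-speed} for the compressed path trees, use Lemma~\ref{lemma:cpt-is-small} plus the Cole--Klein--Tarjan MSF algorithm on the $O(\bsize)$-size graph, and charge the dominant $O(\bsize\log(1+n/\bsize))$ work and $O(\log^2 n)$ span to the batch RC tree updates. The extra remark about identifying compressed edges with their original heaviest representatives is a reasonable bookkeeping point that the paper leaves implicit.
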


\begin{proof}
Collecting the endpoints of the edges (Line~\ref{code:collect-endpoints}) takes $O(\bsize{})$ work in expectation
and $O(\log(\bsize{}))$ span w.h.p. using a semisort. By Theorem~\ref{thm:compressed-path-tree-speed}, Line~\ref{code:get-cpt}
takes $O(\bsize{}\log(1+n/\bsize{}))$ work in expectation and $O(\log(n))$ span w.h.p. By Lemma~\ref{lemma:cpt-is-small},
the graph $C \cup E^+$ is of size $O(\bsize{})$, and hence by using the MSF algorithm of Cole et. al.~\cite{CKT96},
which runs in linear work in expectation and logarithmic span w.h.p, Line~\ref{code:compute-mst} takes $O(\bsize{})$
work in expectation and $O(\log(\bsize{}))$ span w.h.p. Lastly, since $C \cup E^+$ is of size $O(\bsize{})$, the batch
updates to the RC tree (Lines \ref{code:rc-delete} and \ref{code:rc-insert}) take $O(\bsize{} \log(1+n/\bsize{}))$ work
in expectation and $O(\log^2(n))$ span w.h.p. Lastly, since $O(\log(\bsize{})) = O(\log(n))$, summing these up, we can conclude
that Algorithm~\ref{alg:incremental-mst} takes $O(\bsize{}\log(1+n/\bsize{}))$ work in expectation and $O(\log^2(n))$ span w.h.p.
\end{proof}

  \section{Applications to the Sliding Window Model}
\newcommand*{\cmdRendering}[1]{{\normalfont\textproc{#1}}\xspace}
\newcommand*{\cmdBulkExpire}[0]{\cmdRendering{BatchExpire}}
\newcommand*{\cmdBulkInsert}[0]{\cmdRendering{BatchInsert}}
\newcommand*{\cmdQuery}[1]{\cmdRendering{#1}}
\newcommand*{\dsName}[1]{{\normalfont\textsf{#1}}\xspace}

We apply our batch-incremental MSF algorithm to efficiently
solve a number of graph problems on a sliding window. For each problem, we
present a data structure that implements the following operations to handle the
arrival and departure of edges:
\begin{itemize}[leftmargin=12pt]
\item $\cmdBulkInsert(B:~\text{\algtype{edge list}})$

Insert the set of edges $B$ into the underlying graph.

\item $\cmdBulkExpire(\Delta: \text{\algtype{int}})$ 

Delete the oldest $\Delta$ edges from the underlying graph.

\end{itemize}
Additionally, the data structure provides query operations specific to the
problem. For example, the graph connectivity data structure offers an
\cmdQuery{isConnected} query operation.

This formulation is a natural extension of the sequential sliding-window model.
Traditionally, the sliding-window model~\cite{DGIM02} entails maintaining the
most recent $W$ items, where $W$ is a fixed, prespecified size. Hence, an
explicit expiration operation in not necessary: the arrival of a new item
triggers the expiration of the oldest item in the window, keeping the size
fixed. More recently, there has been interest in keeping track of variable-sized
sliding windows (e.g., all events in the past $11$ minutes). In this work, we
adopt an interface that allows for rounds of batch inserts (to accept new items)
and batch expirations (to evict items from the old side). Notice that
\cmdBulkExpire{} differs from a delete operation in dynamic algorithms in that
it only expects a count, so the user does not need to know the actual items
being expired to call this operation.
Our results allow for arbitrary interleavings of batch insertions or
expirations, and each of arbitrary size. Matching up equal sized inserts and
expirations can be done to keep the window size fixed, if desired.

Small space is a hallmark of streaming algorithms. For insert-only streams, Sun
and Woodruff~\cite{sun2015tight} show a space lower-bound of $\Omega(n)$ words
for connectivity, bipartiteness, MSF, and
cycle-freeness, and $\Omega(kn)$ words for $k$-certificate assuming a word of
size $O(\log n)$ bits. All our results below, which support not only edge
insertions but also expirations, match these lower bounds except for MSF, which
is within a logarithmic factor.

\subsection{Graph Connectivity}
\label{sec:graph-conn}
\newcommand*{\toa}[0]{\ensuremath{\tau}}


We begin with the basic problem of sliding-window graph connectivity, which is
to maintain a data structure that permit the users to quickly test whether a
given pair of vertices $u$ and $v$ can reach each other in the graph defined by
the edges of the sliding window. More specifically, we prove the following
theorem:
\begin{theorem}[Connectivity]
  \label{thm:sw-conn}
  For an $n$-vertex graph, there is a data structure, \dsName{SW-Conn}, that
  requires $O(n)$ words of space and supports the following:
  \begin{itemize}[leftmargin=12pt]
  \item $\cmdBulkInsert(B)$ handles $\bsize = |B|$ new edges in $O(1+\bsize\log(n/\bsize))$
    expected work and $O(\log^2 \bsize)$ span whp.
  \item $\cmdBulkExpire(\Delta)$ expires $\Delta$ oldest edges in $O(1)$
    worst-case work and span.
  \item $\cmdQuery{isConnected}(u, v)$ returns whether $u$ and $v$ are connected
    in $O(\log n)$ work and span whp.
  \end{itemize}
\end{theorem}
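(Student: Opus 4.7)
The plan is to realize the recent-edge property alluded to in the introduction: assign each arriving edge a weight that is strictly decreasing in its arrival time (for instance, the negation of its arrival timestamp) and maintain the MSF of all edges ever inserted using the batch-incremental MSF data structure from Theorem 3.1. Because the MSF under this weight ordering is a bottleneck spanning tree, the \emph{heaviest} edge on the $u$--$v$ MSF path is the \emph{oldest} edge that we are forced to use to connect $u$ to $v$; thus, to decide whether $u$ and $v$ are connected within a window starting at some timestamp $\toa_0$, it suffices to compare the age of that heaviest edge against $\toa_0$.

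Concretely, the data structure holds (i) the batch-incremental MSF with its underlying RC tree, (ii) a counter $\toa$ of edges ever inserted, and (iii) a counter $\toa_0$ of edges ever expired. For $\cmdBulkInsert(B)$ with $\ell=|B|$, I would assign the $i$-th edge of $B$ weight $-(\toa+i)$ (with a deterministic tie-breaker to keep weights distinct), increment $\toa$ by $\ell$, and invoke the underlying batch-insert; the cost is exactly the $O(\ell\log(1+n/\ell))$ expected work and $O(\log^2 n)$ span of Theorem~3.1, plus $O(\ell)$ for labeling. For $\cmdBulkExpire(\Delta)$, I would simply set $\toa_0 \gets \toa_0+\Delta$, in $O(1)$ work and span; crucially, \emph{no} edge is ever removed from the MSF on expiration. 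For $\cmdQuery{isConnected}(u,v)$, I would use the RC tree to test whether $u$ and $v$ share a root cluster (answering false if not), and otherwise use the standard $O(\log n)$ heaviest-edge path query on the RC tree to find the edge $e^\star$, returning true iff the arrival timestamp stored on $e^\star$ is at least $\toa_0$.

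The correctness argument I would carry out reduces to one lemma: in the MSF maintained under the age-reversed weight order, $u$ and $v$ are connected in the subgraph restricted to edges with timestamp $\ge \toa_0$ if and only if the heaviest edge on the $u$--$v$ MSF path has timestamp $\ge \toa_0$. The ``only if'' direction is the bottleneck tree property, which is an immediate consequence of the red rule: if some $u$--$v$ path uses only fresh edges, then the MSF path, which minimizes the maximum-weight edge, cannot use any stale edge either. The ``if'' direction is trivial since the MSF path is itself a $u$--$v$ path in the full graph. Space is $O(n)$ because the MSF stores at most $n-1$ edges together with one timestamp each, and non-tree edges are discarded by the batch-incremental algorithm at insertion time.

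The main subtlety, and where I would spend the most care in the proof, is the lazy-expiration invariant: the MSF maintained is the MSF of \emph{all} edges ever inserted rather than of the current window, yet connectivity queries against the current window remain correct. The preceding lemma licenses this, but it must be stated precisely against the dynamic weighting scheme, and one must check that the labeling is compatible with the distinct-weights convention used by Algorithm~3.1 (easily fixed by lexicographic tie-breaking). All remaining steps are routine accounting against Theorem~3.1 and the $O(\log n)$ whp depth of the RC tree.
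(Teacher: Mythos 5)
Your proposal is correct and follows essentially the same route as the paper: lazily expire by advancing a timestamp threshold, maintain the batch-incremental MSF under weights $-\toa(e)$, and answer queries via a heaviest-edge path query compared against the threshold (the paper packages your bottleneck-tree lemma as the ``Recent Edge'' lemma cited from Crouch et al., which you instead prove directly via the red rule). The only cosmetic difference is that you supply the proof of that lemma yourself, which the paper omits.
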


Following Crouch~et~al.~\cite{crouch2013dynamic}, we will prove this by reducing
it to the problem of incremental minimum spanning tree. Let $\toa(e)$ denote the
position that edge $e$ appears in the stream. Then, implicit in their paper is
the following lemma:
\begin{lemma}[Recent Edge~\cite{crouch2013dynamic}]
  \label{lemma:recent-edge-one}
  If $F$ is a minimum spanning forest (MSF) of the edges in the stream so far,
  where each edge $e$ carries a weight of $-\toa(e)$, then any pair of vertices
  $u$ and $v$ are connected if and only if (1) there is a path between $u$ and
  $v$ in $F$ and (2) the heaviest edge $e^*$ (i.e., the oldest edge) on this
  path satisfies $\toa(e^*) \geq T_W$, where $T_W$ is the $\toa(\cdot)$ of the
  oldest edge in the window.
\end{lemma}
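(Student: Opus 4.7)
The proof naturally splits into two directions, and both boil down to invoking the red rule (stated earlier for MSFs with arbitrary ties, but made sharper here since all $\tau$-values are distinct, making the MSF unique and its heaviest cycle edges unique). I will set up the notation so that ``heaviest on a path'' under weights $-\toa(\cdot)$ means ``smallest $\toa$'' which in turn means ``oldest.'' A vertex pair $(u,v)$ being connected \emph{in the window} means connected using only edges $e$ with $\toa(e) \geq T_W$.

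The plan for the ($\Leftarrow$) direction is immediate: if there is a $u$-to-$v$ path in $F$ whose oldest edge $e^*$ satisfies $\toa(e^*) \geq T_W$, then every edge of that path is in the window, so the same path witnesses connectivity in the current window.

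The plan for the ($\Rightarrow$) direction is the heart of the lemma and goes by contradiction using the red rule. Suppose $u$ and $v$ are connected in the window, say via a path $P$ consisting entirely of edges $e$ with $\toa(e) \geq T_W$. First, since $P$ exists in the full stream, $u$ and $v$ lie in the same tree of $F$, so there is a unique $F$-path $Q$ from $u$ to $v$; this establishes condition (1). For condition (2), let $e^*$ be the oldest edge on $Q$ (equivalently, the heaviest under weights $-\toa(\cdot)$), and assume for contradiction that $\toa(e^*) < T_W$. Then $P \cup Q$ contains a cycle through $e^*$. Every edge of $P$ has $\toa(\cdot) \geq T_W > \toa(e^*)$, so under weights $-\toa(\cdot)$ every edge of $P$ is strictly lighter than $e^*$; every other edge of $Q$ is, by choice of $e^*$ as the heaviest on $Q$, no heavier than $e^*$. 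Hence $e^*$ is a strictly-heaviest edge on this cycle. Since all $\toa$-values are distinct, the MSF of the stream is unique, and the red rule forces $e^* \notin F$, contradicting $e^* \in Q \subseteq F$.

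The one subtlety worth being careful about is the uniqueness: the generic red rule only guarantees \emph{some} MSF avoids the heaviest cycle edge, which is not enough because we have committed to a specific $F$. However, distinct $\toa$-values make the edge weights $-\toa(\cdot)$ all distinct, which both (a) makes the MSF unique and (b) upgrades the red rule to ``the strictly-heaviest edge on any cycle lies in no MSF,'' closing the argument. I expect this to be the only delicate point; the rest is the standard cycle-argument bookkeeping.
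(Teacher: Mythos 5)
The paper does not actually prove this lemma: it is stated as ``implicit in'' Crouch et al.\ and imported with a citation, so there is no in-paper argument to compare against. Judged on its own, your proof is correct and is the standard argument one would expect. The ($\Leftarrow$) direction is right: if the oldest edge on the $F$-path is unexpired, every edge on it is. For ($\Rightarrow$), your cycle argument goes through; the one assertion you state without justification --- that $P\cup Q$ contains a cycle through $e^*$ --- follows because $e^*\in Q\setminus P$ and the symmetric difference of two $u$--$v$ paths is an edge-disjoint union of cycles, which is worth a sentence. Your flagged subtlety is genuinely the right one to flag: the version of the red rule the paper records (Lemma~4.1) is only existential (``there exists an MSF not containing $e$''), which is not enough to conclude $e^*\notin F$ for the particular $F$ being maintained; you correctly upgrade to the strict form using the distinctness of the $\toa(\cdot)$ values (each stream position is distinct, so the weights $-\toa(\cdot)$ are distinct, the MSF is unique, and the strictly heaviest edge of any cycle lies in no MSF). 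This is a complete and correct proof of the lemma the paper leaves unproved.
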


\begin{proof}[Proof~of~Theorem~\ref{thm:sw-conn}]
  We maintain (i)~an incremental MSF data structure from
  Theorem~\ref{thm:incremental-mst} and (ii)~a variable $T_W$, which tracks the
  arrival time $\toa(\cdot)$ of the oldest edge in the window. The operation
  $\cmdBulkInsert(B)$ is handled by performing a batch insert of $\bsize = |B|$
  edges, where an edge $e \in B$ is assigned a weight of $-\toa(e)$. The
  operation $\cmdBulkExpire(\Delta)$ is handled by advancing $T_W$ by $\Delta$.
  The cost of these operations is clearly as claimed.

  The query $\cmdQuery{isConnected}(u, v)$ is answered by querying for the
  heaviest edge on the path between $u$ and $v$ in the data structure maintained
  and applying the conditions in the recent edge lemma
  (Lemma~\ref{lemma:recent-edge-one}). The claimed cost bound follows because
  the MSF is maintained as a dynamic tree data structure that supports path
  queries in $O(\log n)$~\cite{acar2020batch}.
\end{proof}

Often, applications, including those studied here later, depend on an operation
\cmdQuery{numComponents()} that returns the number of connected components in
the graph. It is unclear how to directly use the above algorithm, which uses
lazy deletion, to efficiently support this query. We now describe a variant,
known as \dsName{SW-Conn-Eager}, which supports \cmdQuery{numComponents()} in
$O(1)$ work.

The number of connected components can be computed from the number of edges in
the minimum spanning forest (MSF) that uses only unexpired edges as
\[
  \text{\# of
    components} = n - \text{\# of MSF edges}.
\]
To this end, we modify \dsName{SW-Conn} to additionally keep a parallel
ordered-set data structure $\mathcal{D}$, which stores all unexpired MSF edges
ordered by $\toa(\cdot)$. This is maintained as follows: The $\cmdBulkInsert$
operation causes $F^+$ to be added to the MSF and $F^-$ to be removed from the
MSF. We can adjust $\mathcal{D}$ using cost at most $O(n\log(n/t))$ work and
$O(\log^2 n)$ span~(e.g.,~\cite{BlellochR98treap,BlellochFS16splitjoin}). The
$\cmdBulkExpire$ operation applies \textproc{Split} to find expired edges
(costing $O(\log n)$ work and span) and explicitly deletes these edges from the
MSF (costing expected $O(\Delta\log(n/\Delta))$ work and $O(\log^2 n)$ span
whp.). With these changes, \cmdQuery{numComponents()} is answered by returning
$n - |\mathcal{D}|$ and \dsName{SW-Conn-Eager} has the following cost bounds:

\begin{theorem}[Connectivity With Component Counting]
  \label{thm:sw-conn-eager}
  For an $n$-vertex graph, there is a data structure, \dsName{SW-Conn-Eager},
  that requires $O(n)$ space and supports the following:
  \begin{itemize}[leftmargin=12pt]
  \item $\cmdBulkInsert(B)$ handles $\bsize = |B|$ new edges in $O(1+\bsize\log(n/\bsize))$
    expected work and $O(\log^2 n)$ span whp.
  \item $\cmdBulkExpire(\Delta)$ expires $\Delta$ oldest edges in
    $O(\Delta\log(1 + n/\Delta) + \log n)$ expected work and $O(\log^2 n)$ span
    whp.
  \item $\cmdQuery{isConnected}(u, v)$ returns whether $u$ and $v$ are connected
    in $O(\log n)$ work and span whp.
  \item $\cmdQuery{numComponents}()$ returns the number of connected components
    in $O(1)$ worst-case work and span.
  \end{itemize}
\end{theorem}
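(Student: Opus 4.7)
The plan is to augment the lazy \dsName{SW-Conn} data structure of Theorem~\ref{thm:sw-conn} with a parallel join-based ordered set $\mathcal{D}$ (e.g., a treap) that mirrors the current MSF but keyed by arrival time $\toa(\cdot)$, so that the $\toa$-smallest prefix of $\mathcal{D}$ corresponds to the oldest MSF edges. Under the invariant that the maintained MSF is exactly the MSF of the currently unexpired edges, I would answer $\cmdQuery{numComponents}()$ in $O(1)$ by returning $n - |\mathcal{D}|$, keep $\cmdQuery{isConnected}(u,v)$ as the same RC-tree heaviest-edge-on-path query of Theorem~\ref{thm:sw-conn}, and reduce everything else to showing that $\mathcal{D}$ and the MSF can be maintained within the stated insert/expire budgets.

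For $\cmdBulkInsert(B)$, I would first invoke the batch-incremental MSF of Algorithm~\ref{alg:incremental-mst}, reusing the already-exposed set of newly added MSF edges $E(M)\cap E^+$ and the set of evicted MSF edges $E(C)\setminus E(M)$ (each of size $O(\bsize)$), and then apply one bulk insert and one bulk delete to $\mathcal{D}$; the standard $O(k\log(1+m/k))$ bulk bound for join-based ordered sets yields $O(\bsize\log(1+n/\bsize))$ expected work and $O(\log^2 n)$ span whp., subsumed by the MSF cost. For $\cmdBulkExpire(\Delta)$, I would advance $T_W$ by $\Delta$, use a constant number of $O(\log n)$ \textproc{Split}s on $\mathcal{D}$ to peel off the prefix $E^{*}$ of newly expired MSF edges, discard that prefix, and feed $E^{*}$ into the RC tree's \textproc{BatchDelete}, which by~\cite{acar2020batch} runs in $O(\Delta\log(1+n/\Delta))$ expected work and $O(\log^2 n)$ span whp., giving the claimed $O(\Delta\log(1+n/\Delta)+\log n)$ bound. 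The $O(n)$ space bound is immediate since the RC tree uses $O(n)$ nodes and $\mathcal{D}$ holds at most $n-1$ edges.

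The substantive part of the proof is to show that eagerly deleting expired MSF edges, without searching for replacement edges among the unexpired non-MSF edges, still leaves behind the MSF of the unexpired subgraph. My plan uses the specific structure of the weighting $w(e) = -\toa(e)$, under which expired edges are exactly the heaviest edges present. Fix any unexpired non-MSF edge $e$ and consider the fundamental cycle it induces with the current MSF $F$; the cycle property forces $e$ to be the strictly heaviest edge of that cycle, so every other edge of the cycle is lighter than $e$ and therefore also unexpired. Hence the whole cycle except for $e$ survives inside $F \cap E_{\text{unexp}}$, so $e$ is redundant and need not be promoted. Combined with the fact that each surviving MSF edge is still the minimum-weight edge on its fundamental cut in $E_{\text{unexp}}$, this shows $F \cap E_{\text{unexp}}$ is precisely the MSF of the unexpired graph, legitimizing the eager-delete strategy.

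I expect the main obstacle to be stating and proving this ``no replacement needed'' lemma cleanly under batched expiration, where many MSF edges are dropped simultaneously and one must verify that removing them all at once (rather than one at a time) preserves the invariant, and then threading the invariant through an arbitrary interleaving of \cmdBulkInsert and \cmdBulkExpire operations so that each subsequent batch-incremental MSF update in Algorithm~\ref{alg:incremental-mst} receives a genuine MSF as its starting state. Once this lemma is in hand, the cost and correctness claims all drop out of Theorem~\ref{thm:incremental-mst}, Theorem~\ref{thm:sw-conn}, standard bulk-update bounds for parallel balanced BSTs, and the batch-delete bound of~\cite{acar2020batch}.
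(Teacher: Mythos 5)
Your proposal matches the paper's construction essentially exactly: an ordered set $\mathcal{D}$ of MSF edges keyed by arrival time, $\cmdQuery{numComponents}() = n - |\mathcal{D}|$, bulk updates to $\mathcal{D}$ from the $F^+/F^-$ sets produced by the batch-incremental MSF, and a \textproc{Split} followed by an explicit batch deletion of expired MSF edges on $\cmdBulkExpire$. The ``no replacement needed'' lemma you flag as the substantive step is correct (it is the recent-edge property restated, and the invariant does thread through interleavings since $\textproc{MSF}(A \cup B) = \textproc{MSF}(\textproc{MSF}(A)\cup B)$); the paper leaves this implicit, so your write-up is if anything more complete than the original.
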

\noindent We apply this theorem to solve two problems below.
\subsection{Bipartiteness}
To monitor if a graph is bipartite, we apply a known
reduction~\cite{AhnGM12linearsketch,crouch2013dynamic}: a graph $G$ is bipartite
if and only if its cycle double cover has exactly twice as many connected
components as $G$. A cycle double cover $D(G)$ of $G$ is a graph in which each
vertex $v$ is replaced by two vertices $v_1$ and $v_2$, and each edge $(u, v)$,
by two edges $(u_1, v_2)$ and $(u_2, v_1)$. Hence, $D(G)$ has twice as many
vertices as $G$.

We can track the number of connected components of both the graph in the sliding
window and its double cover by running two parallel instances of
\dsName{SW-Conn-Eager}. Notice the edges of the cycle double cover $D(G)$ can be
managed on the fly during \cmdBulkInsert{} and \cmdBulkExpire{}. Hence, we have
the following:
\begin{theorem}[Bipartite Testing]
  \label{thm:bipartiteness}
  For an $n$-vertex graph, there is a data structure, \dsName{SW-Bipartiteness},
  that requires $O(n)$ space and supports the following:
  \begin{itemize}[leftmargin=12pt]
  \item $\cmdBulkInsert(B)$ handles $\bsize = |B|$ new edges in $O(\bsize\log(1+n/\bsize))$
    expected work and $O(\log^2 n)$ span whp.
  \item $\cmdBulkExpire(\Delta)$ expires $\Delta$ oldest edges in
    $O(\Delta\log(1 + n/\Delta) + \log n)$ expected work and $O(\log^2 n)$ span
    whp.
  \item $\cmdQuery{isBipartite}()$ returns a Boolean indicating whether the
    graph is bipartite in $O(1)$ worst-case work and span.
  \end{itemize}
\end{theorem}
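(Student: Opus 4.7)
The plan is to implement \dsName{SW-Bipartiteness} by running two parallel copies of \dsName{SW-Conn-Eager} from Theorem~\ref{thm:sw-conn-eager}: one copy $\mathcal{C}_G$ tracks the sliding-window graph $G$ on $n$ vertices, and a second copy $\mathcal{C}_D$ tracks its cycle double cover $D(G)$ on $2n$ vertices. By the cited reduction, $G$ is bipartite if and only if $\mathcal{C}_D$ has exactly twice as many connected components as $\mathcal{C}_G$. Hence \cmdQuery{isBipartite}{} is answered in $O(1)$ work and span by reading two \cmdQuery{numComponents}{} counters and comparing.

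First I would handle \cmdBulkInsert{}$(B)$ by, in parallel, (i) forwarding $B$ directly into $\mathcal{C}_G$, and (ii) building the image batch $B' = \{(u_1,v_2),(u_2,v_1) : (u,v)\in B\}$ with a parallel map that copies the arrival timestamp $\toa(\cdot)$ of each original edge onto its two images, then forwarding $B'$ into $\mathcal{C}_D$. The mapping step costs $O(\bsize)$ work and $O(1)$ span, so the two \cmdBulkInsert{} calls dominate. Then I would handle \cmdBulkExpire{}$(\Delta)$ by calling \cmdBulkExpire{}$(\Delta)$ on $\mathcal{C}_G$ and \cmdBulkExpire{}$(2\Delta)$ on $\mathcal{C}_D$; because both instances order expirations by arrival time and the image edges share timestamps with their originals, the two expirations stay consistent with no additional bookkeeping.

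The cost analysis is then a direct substitution into Theorem~\ref{thm:sw-conn-eager}. For $\mathcal{C}_G$, a batch insert costs $O(\bsize \log(1+n/\bsize))$ expected work and $O(\log^2 n)$ span whp. For $\mathcal{C}_D$, the batch has size $2\bsize$ on $2n$ vertices, giving $O(2\bsize \log(1+2n/(2\bsize))) = O(\bsize\log(1+n/\bsize))$ work and $O(\log^2(2n)) = O(\log^2 n)$ span. The identical algebra on $\Delta$ and $2\Delta$ gives the stated expire bound. Space is $O(n) + O(2n) = O(n)$ words.

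The main obstacle, if any, is not an algorithmic one but rather verifying that the reduction stays valid across arbitrary interleavings of batch inserts and expirations. This amounts to checking that after each operation the set of edges stored in $\mathcal{C}_D$ is exactly the cycle double cover of the edges stored in $\mathcal{C}_G$, which follows by induction: the invariant holds initially (both empty), is preserved by \cmdBulkInsert{} because $B' = \{(u_1,v_2),(u_2,v_1):(u,v)\in B\}$ is by definition the double cover of $B$, and is preserved by \cmdBulkExpire{} because timestamps are copied so the $\Delta$ oldest edges of $G$ correspond exactly to the $2\Delta$ oldest edges of $D(G)$.
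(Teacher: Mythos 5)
Your proposal is correct and matches the paper's own argument: the paper likewise runs two parallel \dsName{SW-Conn-Eager} instances, one for $G$ and one for its cycle double cover $D(G)$, manages the double-cover edges on the fly during \cmdBulkInsert{} and \cmdBulkExpire{}, and answers \cmdQuery{isBipartite} by comparing component counts. Your added details (timestamp copying, expiring $2\Delta$ from the double-cover instance, and the invariant check) only flesh out what the paper leaves implicit.
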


\subsection{Approximate MSF Weight}
To approximate the weight of the MSF, we propose the following: For this
problem, assume that the edge weights are between $1$ and $n^{O(1)}$. Using
known reductions~\cite{ChazelleRT05mst,AhnGM12linearsketch,crouch2013dynamic},
the weight of the MSF of $G$ can be approximated up to $1+\vareps$ by tracking
the number of connected components in graphs $G_0, G_1, \dots$, where $G_i$ is a
subgraph of $G$ containing all edges with weight \emph{at most} $(1 +
\vareps)^i$. Specifically, the MSF weight is given by
\newcommand*{\numConn}{\textsf{cc}}
\begin{equation}
  (n - \numConn(G_0)) + \sum_{i \geq 1} (\numConn(G_{i-1}) - \numConn(G_i))(1+\vareps)^i,
  \label{eq:approx-mst}
\end{equation}
where $\numConn(G)$ is the number of connected components in graph $G$.

Let $R = O(\vareps^{-1}\log n)$. We maintain $R$ instances of
\dsName{SW-Conn-Eager} $F_1, \dots, F_{R-1}$ corresponding to the connectivity
of $G_0, G_1, \dots, G_{R-1}$. The arrival of $\bsize$ new edges involves
batch-inserting into $R$ \dsName{SW-Conn-Eager} instances in parallel.
Symmetrically, edge expiration is handled by batch-expiring edges in $R$
instances in parallel. Additionally, at the end of each update operation, we
recompute equation \eqref{eq:approx-mst}, which involes $R$ terms and calls to
\cmdQuery{numComponents()}. This recomputation step requires $O(R)$ work and
$O(\log R) = O(\log^2 n)$ span. Overall, we have the following:
\begin{theorem}[Approximate MSF]
  \label{thm:approx-mst}
  Fix $\vareps > 0$. For an $n$-vertex graph, there is a data structure for
  approximate MSF weight that requires $O(\vareps^{-1}n\log n)$ space and
  supports the following:
  \begin{itemize}[leftmargin=1em]
  \item $\cmdBulkInsert(B)$ handles $\bsize = |B|$ new edges in $O(\vareps^{-1}\bsize\log
    n\log(1+n/\bsize{}))$ expected work and $O(\log^2 n)$ span
    whp.
  \item $\cmdBulkExpire(\Delta)$ expires $\Delta$ oldest edges in
    $O(\vareps^{-1}\Delta\log n\log(1 + n/\Delta))$ expected work and $O(\log^2
    n)$ span whp.
  \item $\cmdQuery{weight}()$ returns an $(1+\vareps)$-approximation to the
    weight of the MSF in $O(1)$ worst-case work and span.
  \end{itemize}
\end{theorem}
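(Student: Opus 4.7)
The plan is to realize the MSF-weight approximation of Chazelle--Rubinfeld--Trevisan on top of parallel copies of \dsName{SW-Conn-Eager}. Assuming edge weights lie in $[1, n^{O(1)}]$, only $R = O(\vareps^{-1}\log n)$ thresholds of the form $(1+\vareps)^i$ are relevant, so I would instantiate one copy $F_i$ of \dsName{SW-Conn-Eager} for each such $i$, with $F_i$ maintaining the sliding-window subgraph $G_i = \{e : w(e)\leq (1+\vareps)^i\}$. This immediately gives the space bound: $R$ copies of an $O(n)$-word structure total $O(\vareps^{-1}n\log n)$ words.

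For $\cmdBulkInsert(B)$ with $\bsize = |B|$, I would first compute, for each $e = (u,v,w)\in B$, the smallest $i^*(e)$ with $(1+\vareps)^{i^*(e)}\geq w$, then in parallel batch-insert into each $F_i$ exactly those edges in $B$ with $i^*(e) \leq i$. Grouping the edges by target instances via a semisort (as used in the proof of Theorem~\ref{thm:incremental-mst}) costs $O(R\bsize)$ work and $O(\log n)$ span. Since each edge lands in at most $R$ instances, the aggregate batch-insert work is $R\cdot O(\bsize\log(1+n/\bsize)) = O(\vareps^{-1}\bsize\log n\log(1+n/\bsize))$ in expectation; running the $R$ instances in parallel, the overall span is the per-instance $O(\log^2 n)$ whp. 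For $\cmdBulkExpire(\Delta)$, I would translate the global count $\Delta$ into per-instance counts $\Delta_i$ (the number of the $\Delta$ oldest stream edges residing in $F_i$), tallied in $O(\Delta)$ work using the stored $i^*(\cdot)$ labels, and then call $F_i.\cmdBulkExpire(\Delta_i)$ in parallel. Because $\Delta_i \leq \Delta$ and $\sum_i \Delta_i \leq R\Delta$, the total expected work is $\sum_i O(\Delta_i\log(1+n/\Delta_i) + \log n) = O(\vareps^{-1}\Delta\log n\log(1+n/\Delta))$ and the span is $O(\log^2 n)$ whp.

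Finally, to answer $\cmdQuery{weight}()$ in $O(1)$ worst-case time, I would keep a stored value of the approximation sum and refresh it at the end of every update: $R$ parallel $O(1)$ calls to $F_i.\cmdQuery{numComponents}()$ followed by a weighted reduction take $O(R) = O(\vareps^{-1}\log n)$ work and $O(\log R) = O(\log\log n)$ span, both absorbed by the update bounds already established. Correctness that the returned value is a $(1+\vareps)$-approximation to the MSF weight follows directly from the cited reduction combined with the guarantee from Theorem~\ref{thm:sw-conn-eager} that each $F_i$ reports $\numConn(G_i)$ exactly. I expect the only non-routine step to be the per-instance expiration bookkeeping---translating the interface's single count $\Delta$ into the correct $\Delta_i$'s so that each $F_i$'s oldest edges align with the true sliding-window semantics; everything else is a direct composition of the reduction with Theorem~\ref{thm:sw-conn-eager}.
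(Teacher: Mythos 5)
Your proposal is correct and follows essentially the same route as the paper: the same $(1+\vareps)^i$-threshold reduction, $R = O(\vareps^{-1}\log n)$ parallel instances of \dsName{SW-Conn-Eager}, and recomputation of the component-count formula after each update to make \cmdQuery{weight} an $O(1)$ lookup. Your extra bookkeeping for translating the global expiration count $\Delta$ into per-instance expirations is a detail the paper glosses over, and your handling of it is sound.
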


\subsection{$k$-Certificate and Graph $k$-Connectivity}
For a graph $G$, a pair of vertices $u$ and $v$ are $k$-connected if there are
$k$ edge-disjoint paths connecting $u$ and $v$. Extending this, a graph $G$ is
$k$-connected if all pairs of vertices are $k$-connected. This generalizes the
notion of connectivity, which is $1$-connectivity. To maintain a ``witness'' for
$k$-connectivity, we rely on maximal spanning forest decomposition of order $k$,
which decomposes $G$ into $k$ edge-disjoint spanning forest $F_1, F_2, \dots,
F_k$ such that $F_i$ is a maximal spanning forest of $G\setminus(F_1 \cup F_2
\cup \dots \cup F_{i-1})$. This yields a number of useful properties, notably:
\begin{enumerate}[label=(P\arabic*), leftmargin=*]
\item if $u$ and $v$ are connected in $F_i$, then they are at least
  $i$-connected;
\item $u$ and $v$ are $k$-connected in $F_1 \cup F_2 \cup \dots \cup F_k$ iff.
  they are at least $k$-connected in $G$; and
\item $F_1 \cup F_2 \cup \dots \cup F_k$ is $k$-connected iff. $G$ is at least
  $k$-connected.
\end{enumerate}

Crouch~et~al.~\cite{crouch2013dynamic} show how to maintain such decomposition
on a sliding window. When extended to the batch setting, the steps are as
follows: Let $O_0$ be the new batch of edges $B$. For $i = 1, 2, \dots, k$,
insert $O_{i-1}$ into $F_{i}$ and capture the edges being replaced as $O_i$. Via
known reductions~\cite{crouch2013dynamic,AhnGM12linearsketch}, we have that the
unexpired edges of $F_1 \cup F_2 \cup \dots \cup F_k$ is a $k$-certificate in
the sense of properties (P1)--(P3) above. Additionally, this preserves all cuts
of size at most $k$. We have the following:
\begin{theorem}[$k$-Certificate]
  \label{thm:kcert}
  For an $n$-vertex graph, there is a data structure for $k$-certificate that
  requires $O(kn)$ space and supports the following:
  \begin{itemize}[leftmargin=12pt]
  \item $\cmdBulkInsert(B)$ handles $\bsize = |B|$ new edges in
    $O(k\bsize\log(1+n/\bsize))$ expected work and $O(k\log^2 n)$ span whp.
  \item $\cmdBulkExpire(\Delta)$ expires $\Delta$ oldest edges in
    $O(k\Delta\log(1 + n/\Delta))$ expected work and $O(\log^2 n)$ span whp.
  \item $\cmdQuery{makeCert}()$ returns a $k$-certificate involving at most
    $k(n-1)$ edges in $O(kn)$ work and $O(\log n)$ span.
  \end{itemize}
\end{theorem}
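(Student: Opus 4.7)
The plan is to maintain $k$ instances of \dsName{SW-Conn-Eager} (Theorem~\ref{thm:sw-conn-eager}), call them $F_1,\dots,F_k$, holding the successive levels of the maximal spanning forest decomposition, each weighted by arrival time so that the recent-edge lemma (Lemma~\ref{lemma:recent-edge-one}) applies level-by-level.

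For \cmdBulkInsert{}, I would implement the cascade exactly as described before the theorem: set $O_0 := B$ and, for $i = 1,2,\dots,k$, batch-insert $O_{i-1}$ into $F_i$ using Theorem~\ref{thm:incremental-mst}, collecting into $O_i$ the set of edges that the batch insertion removed from $F_i$ (precisely the heaviest-weight, i.e.\ oldest, edges displaced on the cycles induced by the new insertions). The key observation that drives the work bound is $|O_i|\le|O_{i-1}|$: each newly inserted edge either joins two components of $F_i$ and displaces nothing, or closes a cycle and displaces exactly one edge. Hence $|O_i|\le\bsize$ at every level, so Theorem~\ref{thm:incremental-mst} gives $O(\bsize\log(1+n/\bsize))$ expected work and $O(\log^2 n)$ span w.h.p.\ per level, and running the $k$ levels sequentially yields the claimed $O(k\bsize\log(1+n/\bsize))$ work and $O(k\log^2 n)$ span.

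For \cmdBulkExpire{}, I would invoke the \cmdBulkExpire{} routine of \dsName{SW-Conn-Eager} on each of the $k$ instances in parallel. By Theorem~\ref{thm:sw-conn-eager}, each such call costs $O(\Delta\log(1+n/\Delta)+\log n)$ expected work and $O(\log^2 n)$ span; summing over the $k$ parallel copies absorbs the lower-order $k\log n$ term into $O(k\Delta\log(1+n/\Delta))$ whenever $\Delta\ge 1$. For \cmdQuery{makeCert}(), I would simply enumerate the (at most $n-1$) unexpired edges of each $F_i$ in parallel and concatenate, yielding at most $k(n-1)$ edges in $O(kn)$ work and $O(\log n)$ span. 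The $O(kn)$ space bound falls out because each $F_i$ maintains an RC tree over at most $n-1$ tree edges, using $O(n)$ words per level.

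Correctness of the cascade — that the union of the $k$ arrival-time-weighted MSFs, filtered to unexpired edges, really is a $k$-certificate for the current window — follows from properties (P1)--(P3) combined with the recent-edge lemma applied at every level: a pair $u,v$ that is at least $i$-connected within the window remains $i$-connected across $F_1,\dots,F_i$, since at each level the recent-edge property guarantees that the level's MSF witnesses connectivity among unexpired edges. I expect the main obstacle to be verifying the batch analogue of Crouch et al.'s cascade, in particular that the set $O_i$ returned by the batch-incremental MSF step is exactly what must be fed into level $i+1$ to preserve the spanning-forest-decomposition invariant, rather than some strictly larger set that could balloon the per-level batch size and break the cascade bound $|O_i|\le|O_{i-1}|$.
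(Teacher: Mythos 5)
Your proposal is correct and follows essentially the same route as the paper: maintain $k$ batch-incremental MSF structures (each with an ordered set of unexpired edges, which is exactly what \dsName{SW-Conn-Eager} packages), cascade the displaced edges $O_{i-1}\to F_i\to O_i$ with $|O_i|\le\bsize$, expire lazily/eagerly via the ordered sets, and return $\bigcup_i \mathcal{D}_i$ for the certificate. The one point you flag as a potential obstacle---that the batch analogue of the Crouch et al.\ cascade preserves the decomposition invariant---is likewise handled in the paper only by citation to known reductions, so you are not missing anything the paper supplies.
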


\begin{proof}
  We maintain each $F_i$ using a batch incremental MSF data structure from
  Theorem~\ref{thm:incremental-mst}. To allow eager eviction of expired edges,
  we additionally keep for each $F_i$ a parallel ordered-set data
  structure~(e.g.,~\cite{BlellochR98treap,BlellochFS16splitjoin})
  $\mathcal{D}_i$, which stores all unexpired edges of $F_i$. The operation
  \cmdBulkInsert{} is handled by sequentially working on $i=1, 2, \dots, k$,
  where for each $i$, edges are bulk-inserted into the MSF data structure for
  $F_i$, propagating replaced edges to $F_{i+1}$. The ordered-set data structure
  $\mathcal{D}_i$ can be updated accordingly. Note that the size of the changes
  to $\mathcal{D}_i$ never exceeds $O(\bsize)$. The operation \cmdBulkExpire{}
  involves expiring edges in all $\mathcal{D}_i$'s. Finally, the operation
  \cmdQuery{makeCert} is supported by copying and returning $\cup_{i=1}^k
  \mathcal{D}_i$. Because each $F_i$ is a forest, it has at most $n - 1$ edges,
  for a total of at most $k(n-1)$ edges across $k$ spanning forests.
\end{proof}

Testing whether a graph is $k$-connected appears to be difficult in the
fully-dynamic setting. Sequentially, an algorithm with $O(n\log n)$ time per
update is known~\cite{eppstein1997sparsification}. By contrast, for the
incremental (insert-only) setting, there is a recent algorithm with
$\widetilde{O}(1)$ time per update~\cite{GoranciHT18}. For sliding window, as a
corollary of Theorem~\ref{thm:kcert}, the certificate generated can be used to
test $k$-connectivity via a parallel global min-cut
algorithm~(e.g.,~\cite{GeissmannG18,Ghaffari0T20}). Because there are $O(kn)$
edges, this can be computed in $O(kn\log n + n\log^4 n)$ work and $O(\log^3 n)$
span~\cite{Ghaffari0T20}.


\subsection{Cycle-freeness}
To monitor whether a graph contains a cycle, we observe that a graph that has no
cycles is a spanning forest. Hence, if $F_1$ is a spanning forest of a graph
$G$, then $G\setminus F_1$ must not have any edges provided that $G$ has no
cycles. To this end, we use the data structure from Theorem~\ref{thm:kcert} with
$k = 2$, though we are not interested in making a certificate. To answer whether
the graph has a cycle, we check to see if $F_2$ is empty, which can be done in
$O(1)$ work and span. Hence, we have the following:

\begin{theorem}[Cycle-freeness]
  \label{thm:cyclefreeness}
  For an $n$-vertex graph, there is a data structure for cycle-freeness that
  requires $O(n)$ space and supports the following:
  \begin{itemize}[leftmargin=12pt]
  \item $\cmdBulkInsert(B)$ handles $\bsize = |B|$ new edges in
    $O(\bsize\log(1+n/\bsize))$ expected work and $O(\log^2 n)$ span whp.
  \item $\cmdBulkExpire(\Delta)$ expires $\Delta$ oldest edges in
    $O(\Delta\log(1 + n/\Delta))$ expected work and $O(\log^2 n)$ span whp.
  \item $\cmdQuery{hasCycle}()$ returns true or false indicating whether the
    graph has a cycle in $O(1)$ work and span.
  \end{itemize}
\end{theorem}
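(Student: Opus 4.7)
The plan is to instantiate the $k$-certificate data structure from Theorem~\ref{thm:kcert} with the specific choice $k=2$, and derive cycle-freeness as an immediate consequence of whether the second forest $F_2$ is empty.

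First, I would establish the correctness claim: a graph $G$ contains a cycle if and only if $F_2 \neq \emptyset$. For the forward direction, if $G$ has a cycle, then $F_1$ (a maximal spanning forest of $G$) cannot contain the entire cycle, so some edge of the cycle lies in $G \setminus F_1$; since $F_2$ is a maximal spanning forest of $G \setminus F_1$ and at least one endpoint pair of this leftover edge is connected in $G \setminus F_1$ through the cycle, $F_2$ must include at least one edge. Conversely, if $F_2 \neq \emptyset$, then $F_1 \cup \{e\}$ for $e \in F_2$ contains a cycle (since $F_1$ was maximal), and this cycle is present in $G$.

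Next, I would show the cost bounds follow by direct specialization. Setting $k=2$ in Theorem~\ref{thm:kcert} immediately yields $O(n)$ space, $O(\bsize \log(1+n/\bsize))$ work and $O(\log^2 n)$ span for \cmdBulkInsert, and $O(\Delta \log(1+n/\Delta))$ work and $O(\log^2 n)$ span for \cmdBulkExpire, since the $k$ factor becomes a constant.

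Finally, for the $O(1)$ \cmdQuery{hasCycle} query, I would augment the $k$-certificate data structure by maintaining a simple counter $c_2 = |\mathcal{D}_2|$ tracking the number of unexpired edges in $F_2$. This counter is updated alongside the ordered-set $\mathcal{D}_2$ during each \cmdBulkInsert and \cmdBulkExpire, which only adds $O(1)$ overhead per modification (absorbed by the existing bounds). The query then returns the Boolean $c_2 > 0$ in $O(1)$ work and span. The main (and essentially only) subtlety to verify is the biconditional above connecting the existence of a cycle to the non-emptiness of $F_2$; once that is in hand, every other part of the theorem is a mechanical consequence of Theorem~\ref{thm:kcert}.
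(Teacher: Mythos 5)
Your proposal matches the paper's proof: the paper likewise instantiates the $k$-certificate structure of Theorem~\ref{thm:kcert} with $k=2$ and answers \cmdQuery{hasCycle} by testing whether $F_2$ (equivalently, $\mathcal{D}_2$) is empty, with the cost bounds inherited directly. Your explicit biconditional ($G$ has a cycle iff $F_2 \neq \emptyset$) and the $O(1)$ counter on $|\mathcal{D}_2|$ are just slightly more detailed renderings of what the paper states in one line, so the two arguments are essentially the same.
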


\subsection{Graph Sparsification}
\label{sec:graph-sparsification}
\newcommand{\estp}[0]{\widetilde{p}\xspace}

The graph sparsification problem is to maintain a small, space-bounded subgraph
so that when queried, it can produce a sparsifier of the graph defined by the
edges of the sliding window. An $\vareps$-sparsifier of a graph $G$ is a
weighted graph on the same set of vertices that preserves all cuts of $G$ upto
$1\pm \vareps$ but has only about $O(n\cdot\polylog(n))$ edges. Existing
sparsification algorithms commonly rely on sampling each edge with probability
inversely proportional to that edge's connectivity parameter. Specifically, we
take advantage of the following result:
\begin{theorem}[Fung~et~al.~\cite{FungHHP19}]
  \label{thm:edge-sparsifier-prob}
  Given an undirected, unweighted graph $G$, let $c_e$ denote the edge
  connectivity of the edge $e$. If each edge $e$ is sampled independently with
  probability
  \[
    p_e \geq \min\left(1, \tfrac{253}{c_e\vareps^2}\log^2 n \right)
  \]
  and assigned a weight of $1/{p_e}$, then with high probability, the resulting
  graph is an $\vareps$-sparsifier of $G$.
\end{theorem}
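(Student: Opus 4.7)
The plan is to prove this classical cut-sparsification result via the Benczur--Karger recipe: a Chernoff-style concentration bound applied to each cut separately, combined with a cut-counting union bound that leverages the scaling of the sampling probability with edge connectivity. Since the statement is attributed to Fung~et~al., the goal of the sketch is to give a self-contained proof route; their paper may handle some of the constants more sharply.

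First I would fix an arbitrary cut $(S,\bar S)$ of value $K$ in $G$ and consider its sampled weight $\widetilde K = \sum_{e \in E(S,\bar S)} X_e / p_e$, where $X_e$ is the indicator that edge $e$ is kept. Linearity of expectation gives $\mathrm{E}[\widetilde K] = K$, and a scaled Chernoff bound yields
\[
  \Pr\!\left[\,|\widetilde K - K| > \vareps K\,\right] \;\leq\; 2\exp\!\left(-\Theta(\vareps^2 K \, p_{\min})\right),
\]
where $p_{\min} = \min_{e \in E(S,\bar S)} p_e$. The crucial structural fact is that every edge $e$ lying in a cut of value $K$ must satisfy $c_e \leq K$, since that very cut separates the endpoints of $e$; hence $p_e \geq (253/\vareps^2)\log^2(n)/K$, so $K \cdot p_{\min} \geq (253/\vareps^2)\log^2 n$. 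Thus each individual cut is preserved to within $(1\pm\vareps)$ with failure probability at most $n^{-c}$ for any desired constant $c$, provided the leading constant $253$ is taken large enough.

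To extend this to \emph{all} cuts simultaneously, I would invoke Karger's cut-counting bound, which states that the number of cuts of value at most $\alpha K^{*}$ is at most $n^{2\alpha}$, where $K^{*}$ is the global min-cut. Bucketing cuts by value into geometric intervals $[(1+\vareps)^{i}K^{*}, (1+\vareps)^{i+1}K^{*})$, the Chernoff failure rate within a bucket at level $\alpha \approx (1+\vareps)^{i}$ decays as $n^{-\Omega(\alpha)}$, while the bucket contains only $n^{O(\alpha)}$ cuts; choosing the hidden constants so that the exponent gap is linear in $\alpha$ makes the sum over buckets telescope to $o(1/\poly(n))$, and a single union bound closes the argument.

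The main obstacle is that edges within the same cut can have very different sampling probabilities, so the flat Chernoff estimate above, which sees only $p_{\min}$, is lossy whenever a cut mixes edges of wildly different connectivity. The cleaner resolution---and the step that would require the most care to pin down the constant $253$---is to partition the edges of each cut into $O(\log n)$ connectivity levels using powers of two in $c_e$, apply concentration per level with the sharper $p_e$ of that level, and then union-bound over the triples (level, bucket, cut) jointly. Once this layered decomposition is set up, the rest is bookkeeping to ensure that all three sources of failure probability simultaneously sum to $1/\poly(n)$.
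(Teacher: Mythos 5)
First, a point of reference: the paper does not prove this statement at all --- it is imported verbatim from Fung et al.\ \cite{FungHHP19} and used as a black box, so there is no in-paper proof to compare against. Judged on its own terms, your sketch follows the right general template (per-cut Chernoff plus cut counting), but the union bound in your second paragraph has a genuine gap. You claim the per-cut failure probability ``decays as $n^{-\Omega(\alpha)}$'' for cuts of value $\alpha K^{*}$. Under connectivity-proportional sampling this is false: your own estimate gives $K\,p_{\min} \geq \tfrac{253}{\vareps^2}\log^2 n$, a bound that is \emph{independent of} $\alpha$, because the largest-connectivity edge in the cut may have $c_e$ as large as $K$ itself, dragging $p_{\min}$ down to $\Theta(\log^2 (n)/(\vareps^2 K))$. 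So every cut fails with probability $n^{-\Theta(\log n)}$ and no better, while Karger's bound permits $n^{2\alpha}$ cuts at level $\alpha$; once $\alpha \gg \log n$ the union bound collapses. (The $n^{-\Omega(\alpha)}$ decay is what \emph{uniform} sampling at rate $\Theta(\log (n)/(\vareps^2 K^{*}))$ buys you --- Karger's original theorem --- not sampling tied to each edge's own connectivity.)

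Your final paragraph correctly identifies the fix as a decomposition by connectivity level, but the step you defer as ``bookkeeping'' is the technical heart of the result. Splitting a cut into $O(\log n)$ connectivity classes and applying concentration per class still leaves you union-bounding over exponentially many cuts within each class; what rescues the argument is a counting lemma for \emph{cut projections}: the number of distinct sets of the form $C \cap F_i$ (where $F_i$ is the class of edges with $c_e \approx 2^i$) induced by cuts whose level-$i$ contribution is at most $\alpha 2^i$ is $n^{O(\alpha)}$. Proving this requires a Karger-style counting argument carried out inside suitably connected subgraphs, and carrying it out for \emph{edge} connectivity $c_e$ (as the theorem is stated) rather than for Benczur--Karger strong connectivity is precisely the contribution of Fung et al. A second, related issue: per-level concentration cannot preserve each level multiplicatively (a level may contribute only a single edge to the cut), so one must settle for an additive error on the order of $\vareps 2^i/\log n$ per level and then charge these errors against the total cut value, which again needs the connectivity structure. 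Neither piece is routine, so as written the sketch does not close.
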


In the context of streaming algorithms, implementing this has an important
challenge: the algorithm has to decide whether to sample/keep an edge before
that edge's connectivity is known.

Our aim is to show that the techniques developed in this paper enable
maintaining an $\vareps$-sparsifier with $O(n\cdot\polylog(n))$ edges in the
batch-parallel sliding-window setting. To keep things simple, the bounds, as
stated, are not optimized for $\polylog$ factors.

We support graph sparsification by combining and adapting existing techniques
for fast streaming connectivity estimation~\cite{GoelKP-corr2012} and sampling
sufficiently many edges at geometric probability
scales~(e.g.,~\cite{AhnGM12linearsketch,crouch2013dynamic}).

The key result for connectivity estimation is as follows: For $i = 1, 2, \dots,
L = O(\log n)$ and $j = 1, 2, \dots, K = O(\log n)$, let $G_i^{(j)}$ denote a
subgraph of $G$, where each edge of $G$ is sampled independently with
probability $1/2^i$ and $G_0^{(j)} = G$. Then, the \emph{level} $L(u, v)$,
defined to be the largest $i$ such that $u$ and $v$ are connected in $G_i^{(j)}$
for all $0 \leq j \leq K$, gives an estimate of $uv$ connectivity:
\begin{lemma}[~\cite{GoelKP-corr2012}]
  \label{lem:connestimate}
  With high probability, for every edge $e$ of $G$,
  $\Theta(s_e/\log n) \leq 2^{L(e)} \leq 2c_e$, where $s_e$ denotes strong
  connectivity and $c_e$ denotes edge connectivity.
\end{lemma}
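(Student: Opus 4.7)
The plan is to establish the two inequalities separately, since they rely on fundamentally different tools: the upper bound $2^{L(e)} \leq 2c_e$ is a first-moment argument on a minimum $uv$-cut, while the lower bound $2^{L(e)} \geq \Omega(s_e/\log n)$ requires a cut-preservation-under-sampling theorem of Karger for dense (strongly connected) subgraphs. Throughout I would choose $K = \Theta(\log n)$ with a sufficiently large constant so that the failure probabilities survive a union bound over all $m \leq n^2$ edges, all $L = O(\log n)$ levels, and all $K$ repetitions.

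For the upper bound, fix $e=(u,v)$ and let $S$ be a minimum $uv$-cut, so $|S|=c_e$. Let $i^*$ be the smallest integer with $2^{i^*} \geq 2c_e$; note $2^{i^*} \leq 4c_e$. For any single sample $G_{i^*}^{(j)}$, the expected number of edges of $S$ that survive is $c_e/2^{i^*} \leq 1/2$, so by Markov's inequality the probability that at least one edge of $S$ survives is at most $1/2$. Independence across $j = 1, \dots, K$ then gives that all $K$ samples keep at least one edge of $S$ with probability at most $2^{-K} = n^{-\Omega(1)}$. Whenever this fails, some $G_{i^*}^{(j)}$ has $u$ and $v$ disconnected, forcing $L(u,v) < i^*$, hence $2^{L(e)} \leq 2^{i^*-1} \leq 2c_e$. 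A union bound over edges yields the bound for all $e$ simultaneously.

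For the lower bound, I would invoke Karger's sampling theorem: if $H$ is a $k$-edge-connected graph on $n$ vertices and each edge is kept independently with probability $p$ satisfying $pk \geq C\log n$ for a suitable constant $C$, then every cut of $H$ is preserved (in particular $H$ remains connected) with probability $1 - n^{-\Omega(1)}$. By the definition of strong edge connectivity, there is a subgraph $H_e \ni e$ that is $s_e$-edge-connected. For any $i$ with $2^i \leq s_e / (C\log n)$, applying the theorem to $H_e$ with $p = 1/2^i$ shows that $H_e$ remains connected inside every $G_i^{(j)}$ with high probability, so $u$ and $v$ are connected in every $G_i^{(j)}$ at this level, forcing $L(e) \geq \log_2\bigl(s_e/(C\log n)\bigr)$ and therefore $2^{L(e)} = \Omega(s_e/\log n)$.

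The main obstacle is the lower bound, both because it needs an external cut-preservation result applied to a subgraph $H_e$ that the algorithm never constructs, and because the $\log n$ factor in the conclusion is intrinsic to the sampling threshold: Karger-type theorems need sampling rate $\Omega(\log n / k)$ to preserve all cuts of a $k$-connected graph, and this is tight. The remaining work is bookkeeping — choosing $K$ and the leading constants so that the $n^{-\Omega(1)}$ failure probabilities in both directions can absorb union bounds over $m$ edges, $L$ levels, and $K$ independent samples — which ensures the bound holds uniformly over every edge with high probability.
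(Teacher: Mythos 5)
The paper does not prove this lemma at all: it is imported verbatim from the cited reference \cite{GoelKP-corr2012}, so there is no in-paper argument to compare against. Your reconstruction is the standard one and is essentially correct: the upper bound $2^{L(e)} \leq 2c_e$ via a first-moment/Markov argument on a minimum $uv$-cut of size $c_e$, amplified over the $K = \Theta(\log n)$ independent repetitions, and the lower bound $2^{L(e)} = \Omega(s_e/\log n)$ via Karger-style cut preservation applied to an $s_e$-edge-connected subgraph $H_e$ certifying the strong connectivity of $e$ (which exists by definition of $s_e$, even though the algorithm never materializes it). One small point to make explicit: the definition of $L(e)$ as the \emph{largest} $i$ with all $K$ samples connected is only immediately controlled by your event at the single level $i^*$ if the samples are nested across levels (so that connectivity is monotone in $i$); if the $G_i^{(j)}$ are sampled independently per level, you need the union bound over the $O(\log n)$ levels $i \geq i^*$ that you gesture at in your closing paragraph, which costs only a factor $L = O(\log n)$ against the $2^{-K} = n^{-\Omega(1)}$ failure probability and so changes nothing. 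With $K$ and the constant in Karger's threshold chosen large enough to absorb the union bound over all $m \leq n^2$ edges and all levels, the statement holds with high probability for every edge simultaneously, as claimed.
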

The same argument also gives $c_e \leq \Theta(2^{L(e)}\log n)$ whp. While we
cannot explicitly store all these $G_i^{(j)}$'s, it suffices to store each
$G_i^{(j)}$ as a \dsName{SW-Conn} data structure (Theorem~\ref{thm:sw-conn}),
requiring a total of $O(K\cdot{}L\cdot n) = O(n\log^2n)$ space.

When an edge $e$ is inserted, if the algorithm were able to determine that
edge's connectivity, it would sample that edge with the right probability
($p_e$) and maintain exactly the edges in the sparsifier. The problem, however,
is that connectivity can change until the query time. Hence, the algorithm has
to decide how to sample/keep an edge without knowing its connectivity.
To this end, we resort to a technique adapted from
Ahn~et~al.~\cite{AhnGM12linearsketch}: Let $H_0$ be the graph defined by the
edges of the sliding window and for $i = 1, 2, \dots, L$, let $H_i \subseteq
H_0$ be obtained by independently sampling each edge of $H_0$ with probability
$1/2^i$. Intuitively, every edge is sampled at many probability scales upon
arrival.

Storing all these $H_i$'s would require too much space. Instead, we argue that
keeping each $H_i$ as $\mathcal{Q}_i$, where $\mathcal{Q}_i$ is a
\dsName{$k$-SW-Certificate} data structure (Theorem~\ref{thm:kcert}) with $k =
O(\tfrac{1}{\vareps^2}\log^3 n)$ is sufficient\footnote{We remark that the
  $\mathcal{Q}_i$ instances themselves contain enough information to estimate
  $c_e$ for all edges, but we do not know how to efficiently estimate them.}.
Maintaining these requires a total of $O(knL) = O(\vareps^{-2}n\log^4 n)$ space.

Ultimately, our algorithm simulates sampling an edge $e$ with probability
$2^{-\lfloor \log_2 \estp_e \rfloor}$, where
\[
  \estp_e = \min\left(1, O(2^{-L(e)}\vareps^{-2}\log^2 n) \right),
\]
which uses an estimate of $2^{L(e)}$ in place of $c_e$. It answers a
\cmdQuery{sparsify} query as follows:
\begin{quote}
  For $e \in \bigcup_{i=1}^L \mathcal{Q}_i$, output $e$ in the sparsifier with
  weight $1/\estp_e$ if $e$ appears in $\mathcal{Q}_{\beta(e)}$, where $\beta(e)
  = \lfloor \log_2 \estp_e \rfloor$.
\end{quote}

We now show that the $\mathcal{Q}_i$'s retain sufficient edges.

\begin{lemma}
  With high probability, an edge $e$ that is sampled into $H_{\beta(e)}$ is
  retained in $\mathcal{Q}_{\beta(e)}$.
\end{lemma}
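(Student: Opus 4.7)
The plan is to reduce retention in $\mathcal{Q}_{\beta(e)}$ to a bound on the edge-connectivity $c^{(\beta(e))}_e$ between the endpoints of $e$ inside the sampled graph $H_{\beta(e)}$, and then to bound this quantity by combining a Chernoff-type concentration inequality for edge sampling with the connectivity estimate in Lemma~\ref{lem:connestimate}. Recall that $\mathcal{Q}_{\beta(e)} = F_1 \cup \cdots \cup F_k$ is built from a maximal spanning forest decomposition of $H_{\beta(e)}$, which preserves every cut of size at most $k$. Equivalently, if $e = (u,v)$ were omitted from $\mathcal{Q}_{\beta(e)}$, each $F_j$ would have to contain a $uv$-path not using $e$, and edge-disjointness of the $F_j$'s would force $c^{(\beta(e))}_e > k$. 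Hence it suffices to show $c^{(\beta(e))}_e \leq k = O(\vareps^{-2}\log^3 n)$ with high probability.

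For the bound on $c^{(\beta(e))}_e$, let $p = 2^{-\beta(e)} \leq \tilde{p}_e$, so that $H_{\beta(e)}$ is obtained from $G$ by independent $p$-sampling. The $uv$-min-cut of $H_{\beta(e)}$ is upper-bounded by the image under sampling of any particular $uv$-cut of $G$; applying a Chernoff bound to the $c_e$-edge min $uv$-cut of $G$ yields $c^{(\beta(e))}_e \leq O(p \cdot c_e + \log n)$ with high probability. By the definition of $\tilde{p}_e$ we have $p \leq O(2^{-L(e)}\vareps^{-2}\log^2 n)$, and by Lemma~\ref{lem:connestimate} we have $c_e \leq O(2^{L(e)}\log n)$ with high probability, so the factors $2^{\pm L(e)}$ cancel to give $p\cdot c_e = O(\vareps^{-2}\log^3 n)$. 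Choosing the hidden constant in $k$ large enough yields $c^{(\beta(e))}_e \leq k$. The corner case $\tilde{p}_e = 1$ (so $\beta(e) = 0$ and $H_{\beta(e)} = H_0$) is handled directly: then $2^{L(e)} \leq O(\vareps^{-2}\log^2 n)$, and Lemma~\ref{lem:connestimate} forces $c_e \leq O(\vareps^{-2}\log^3 n) \leq k$ already.

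The main subtlety will be the compound conditioning and union bound. Both Lemma~\ref{lem:connestimate} and the Chernoff bound on the sampled cut are high-probability events; one must union-bound them over all $O(m) = n^{O(1)}$ edges, which only inflates the constants in $k$ by a constant factor. One should also check that the additive $O(\log n)$ slack from Chernoff is harmless because $\log n \leq k$, and that replacing $\tilde{p}_e$ by the nearest smaller power of two at most doubles $p\cdot c_e$ and is absorbed by the constant in $k$. Modulo these bookkeeping issues, the retention claim follows immediately from the structural characterization together with the concentration estimate.
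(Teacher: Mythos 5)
Your proof is correct and follows the same overall strategy as the paper's: reduce retention in $\mathcal{Q}_{\beta(e)}$ to showing that the endpoints of $e$ have edge connectivity at most $k = O(\vareps^{-2}\log^3 n)$ inside the sampled graph $H_{\beta(e)}$, and establish that bound by combining $c_e \leq \Theta(2^{L(e)}\log n)$ from Lemma~\ref{lem:connestimate} with a Chernoff bound on the sampling, so that the $2^{\pm L(e)}$ factors cancel. Two points of comparison. First, you make explicit the reduction step (if $e$ survived in no $F_j$, maximality would give $k$ edge-disjoint $uv$-paths avoiding $e$, forcing connectivity above $k$), which the paper leaves as a one-line assertion. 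Second, and more substantively, you apply the concentration bound to the sampled image of a minimum $uv$-cut of $G$, whereas the paper counts how many of a fixed family of $c_e$ edge-disjoint paths survive the sampling. Your cut-based version is actually the logically tighter one: the sampled cut directly upper-bounds the connectivity in $H_{\beta(e)}$, while surviving edges from partially destroyed paths can in principle recombine into new $uv$-paths, so the number of fully surviving paths from a fixed family is not by itself an upper bound on connectivity in the subgraph. Both routes give the same asymptotics, and your bookkeeping (the factor-of-two loss from rounding $\widetilde{p}_e$ to a power of two, the additive $O(\log n)$ Chernoff slack, the $\widetilde{p}_e = 1$ corner case, and the union bound over edges and over the high-probability event of Lemma~\ref{lem:connestimate}) is all absorbed into the constant in $k$ exactly as you say.
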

\begin{proof} Consider an edge $e = \{u, v\}$. There are $c_e$ disjoint paths
  between $u$ and $v$. With high probability, because $c_e \leq \Theta(2^{L(e)}
  \log n)$, the expected number of paths that stay connected in $H_{\beta(e)}$
  is at most $2\estp_e \cdot c_e \leq O(\vareps^{-2}\log^3 n)$. By Chernoff
  bounds, it follows that whp., $e$ has edge connectivity in $H_{\beta(e)}$ at
  most $k = O(\vareps^{-2}\log^3 n)$ for sufficiently large constant. Hence, $e$
  is retained in $\mathcal{Q}_{\beta(e)}$ whp.
\end{proof}

This means that at query time, with high probability, every edge $e$ is sampled
into the sparsifier with probability $2^{-\lfloor \log_2 \estp_e \rfloor} \geq
p_e$, so the resulting graph is an $\vareps$-sparsifier
whp.~(Theorem~\ref{thm:edge-sparsifier-prob}). Moreover, the number of edges in
the sparsifier is, in expectation, at most
\[
  \sum_{e \in E(G)} 2\estp_e = O(\vareps^{-2}\log^3 n)\sum_{e \in E(G)}
  \tfrac1{s_e} = O(\vareps^{-2}n\log^3 n),
\]
where we used Lemma~\ref{lem:connestimate} and the fact that $\sum_e 1/s_e \leq
n - 1$~\cite{BenczurK96,FungHHP19}.

All the ingredients developed so far are combined as follows: The algorithm
maintains a \dsName{SW-Conn} data structure for each $G_i^{(j)}$ and a
\dsName{$k$-SW-Certificate} $\mathcal{Q}_i$ for each $H_i$. The \cmdBulkInsert{}
operation involves inserting the edges into $KL + L$ data structures and the
same number of independent coin flips. The cost is dominated by the cost of
inserting into the $\mathcal{Q}_i$'s, each of which takes
$O(k\bsize\log(1+n/\bsize))$ expected work and $O(k\log^2 n)$ span whp. The
\cmdBulkExpire{} operation involves invoking \cmdBulkExpire{} on all the data
structures maintained; the dominant cost here is expiring edges in the
$\mathcal{Q}_i$'s. Finally, the query operation \cmdQuery{sparsify} involves
considering the edges of $\bigcup_{i=1}^L \mathcal{Q}_i$ in parallel, each
requiring a call to $L(e)$, which can be answered in $O(LK\log n) = O(\log^3 n)$
work and span. In total, this costs $O(nkL\log^3 n) = O(n\polylog(n))$ work and
$O(\polylog(n))$ span. The following theorem summarizes our result for graph
sparsification:

\begin{theorem}[Graph Sparsification]
  \label{thm:sparsification}
  For an $n$-vertex graph, there is a data structure for graph (cut) sparsification
  that requires $O(\vareps^{-2}n\log^4 n)$ space and supports the following:
  \begin{itemize}[leftmargin=1em]
  \item $\cmdBulkInsert(B)$ handles $\bsize = |B|$ new edges in
    $O(\tfrac{1}{\vareps^{2}}\bsize\log(1+\tfrac{n}\bsize{})\log^4 n)$ expected
    work and $O(\vareps^{-2}\log^5 n)$ span whp.
  \item $\cmdBulkExpire(\Delta)$ expires $\Delta$ oldest edges in
    $O(\tfrac{1}{\vareps^{2}}\Delta\log(1+\tfrac{n}\Delta{})\log^4 n)$ expected
    work and $O(\log^2 n)$ span whp.
  \item $\cmdQuery{sparsify}()$ returns an $\vareps$-sparsifier with high
    probability. The sparsifier has $O(\vareps^{-2}n\log^3 n)$ edges and is
    produced in $O(n \polylog(n))$ work and $O(\polylog(n))$ span whp.
  \end{itemize}
\end{theorem}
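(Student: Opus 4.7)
The plan is to assemble the data structure from two families of components built on top of the primitives already developed in this section. For each scale $i \in \{1,\dots,L\}$ with $L = O(\log n)$ and each trial $j \in \{1,\dots,K\}$ with $K = O(\log n)$, I would maintain an instance of \dsName{SW-Conn} (Theorem~\ref{thm:sw-conn}) storing $G_i^{(j)}$, where each inserted edge is independently kept with probability $2^{-i}$; in parallel, for each scale $i$, I would maintain a \dsName{$k$-SW-Certificate} $\mathcal{Q}_i$ (Theorem~\ref{thm:kcert}) with $k = \Theta(\vareps^{-2}\log^3 n)$ storing $H_i$, again an independent $2^{-i}$-sample. The \cmdBulkInsert{} and \cmdBulkExpire{} operations dispatch the corresponding sub-batches into all $KL + L = O(\log^2 n)$ instances in parallel, using fresh coin flips per edge per scale to determine membership. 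The \cmdQuery{sparsify} operation enumerates edges in $\bigcup_i \mathcal{Q}_i$, queries $L(e)$ via the \dsName{SW-Conn} instances, computes the threshold $\beta(e) = \lfloor \log_2 \estp_e \rfloor$ with $\estp_e$ as defined in the preceding discussion, and emits $e$ with weight $1/\estp_e$ if and only if it is still present in $\mathcal{Q}_{\beta(e)}$.

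For correctness, I would combine three ingredients already established in the excerpt: Lemma~\ref{lem:connestimate}, which justifies using $2^{L(e)}$ as a logarithmic-factor proxy for $c_e$ so that $2^{-\beta(e)} \geq p_e$; the retention lemma proved just above the theorem, which shows that any edge surviving the scale-$\beta(e)$ coin flip has edge-connectivity in $H_{\beta(e)}$ at most $k$ whp.\ and is therefore preserved by $\mathcal{Q}_{\beta(e)}$; and Theorem~\ref{thm:edge-sparsifier-prob} of Fung~et~al., which then certifies that the emitted weighted graph is an $\vareps$-sparsifier whp. The expected edge count follows by bounding $\sum_e \estp_e$ and using $\sum_e 1/s_e \leq n-1$, yielding $O(\vareps^{-2} n \log^3 n)$. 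For the cost bounds, the per-batch work is dominated by the $L$ updates to the $\mathcal{Q}_i$'s, each costing $O(k\bsize\log(1+n/\bsize)) = O(\vareps^{-2}\bsize\log(1+n/\bsize)\log^3 n)$ expected work and $O(k\log^2 n) = O(\vareps^{-2}\log^5 n)$ span whp., with the $KL$ \dsName{SW-Conn} updates absorbed into a lower-order term (multiplying by the outer $L$ factor gives the claimed bound). The expiration bound follows analogously from the \cmdBulkExpire{} cost of Theorem~\ref{thm:kcert}, and the space bound decomposes as $KL \cdot O(n) + L \cdot O(kn) = O(\vareps^{-2} n \log^4 n)$. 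The query cost is the edge count times an $O(\log^3 n)$ level computation, i.e., $O(n\polylog(n))$ work and $O(\polylog(n))$ span.

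The main obstacle is not any single calculation but the subtlety of decoupling sampling decisions from connectivity estimation. Because the retention threshold $\estp_e$ depends on $L(e)$, which is itself a random variable determined by coin flips that are entirely separate from the $H_i$ coin flips, one must be careful to argue that the resulting distribution over retained edges really dominates the Fung~et~al.\ requirement edge-by-edge and not merely in aggregate. I would handle this by observing that the $G_i^{(j)}$ and $H_i$ samples use independent randomness, so conditioning on the realized value of $L(e)$ leaves the $H_i$ flip fair, and then combining the retention lemma with a high-probability union bound over the polynomially many edges ever seen in the window to lift ``per-edge whp.'' to ``simultaneously whp.''\ as required by Theorem~\ref{thm:edge-sparsifier-prob}.
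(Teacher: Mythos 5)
Your proposal is correct and follows essentially the same route as the paper: the same $KL$ instances of \dsName{SW-Conn} for connectivity estimation, the same $L$ instances of \dsName{$k$-SW-Certificate} with $k = \Theta(\vareps^{-2}\log^3 n)$ for edge retention, and the same combination of Lemma~\ref{lem:connestimate}, the retention lemma, and Theorem~\ref{thm:edge-sparsifier-prob}, with matching space, work, span, and edge-count accounting. Your closing remark about the independence of the $G_i^{(j)}$ and $H_i$ coin flips is a reasonable elaboration of a point the paper leaves implicit, but it does not change the argument.
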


\subsection{Connection to Batch Incremental}
All applications in this section
(Sections~\ref{sec:graph-conn}--\ref{sec:graph-sparsification}) build on top of
the connectivity data structure presented in Section~\ref{sec:graph-conn}. In
the batch incremental setting, an analog of Theorem~\ref{thm:sw-conn} was given
by Simsiri et~al.~\cite{simsiri2016work}, where \cmdBulkInsert{} takes
$O(\bsize\alpha(n))$ expected work and $O(\polylog(n))$ span and
\cmdQuery{isConnected} takes $O(\alpha(n))$ work and span.

From here, we can derive an analog of Theorem~\ref{thm:sw-conn-eager} using the
following ideas: (i) maintain a component count variable, which is decremented
every time a union successfully joins two previously disconnected components;
and (ii) maintain a list of inserted edges that make up the spanning forest.
This can be implemented as follows: Simsiri et~al.~maintains a union-find data
structure and handles batch insertion by first running a find on the endpoints
of each inserted edge and determining the connected components using a spanning
forest algorithm due to Gazit~\cite{Gazit91}. Notice that the edges that Gazit's
algorithm returns are exactly the new edges for the spanning forest we seek to
maintain and can simply be appended to the list. This yields an analog of
Theorem~\ref{thm:sw-conn-eager}, where \cmdBulkInsert{} still takes
$O(\bsize\alpha(n))$ expected work and $O(\polylog(n))$ span,
\cmdQuery{isConnected} takes $O(\alpha(n))$ work and span, and
\cmdQuery{numComponents} takes $O(1)$ work and span.
Ultimately, this means that replacing Theorems~\ref{thm:sw-conn} and
\ref{thm:sw-conn-eager} with their analogs in each of our applications
effectively replaces the $\log(1 + n/\bsize)$ factor in the work term with a
$\alpha(n)$ term, leading to cost bounds presented in
Table~\ref{tab:summary_bounds}.


  \section{Conclusion}
\label{sec:concl}

In this paper, we designed the first work-efficient parallel algorithm for batch-incremental
MSTs. In addition to being work efficient with respect to the sequential algorithm, our algorithm
is asymptotically faster for sufficiently large batch sizes. A key ingredient of our algorithm was
the construction of a compressed path tree---a tree that summarizes the heaviest edges on all
pairwise paths between a set of marked vertices. We demonstrated the usefulness of our algorithm by
applying it to a range of problems in a generalization of the sliding window dynamic graph streaming
model.

Several interesting avenues of future work arise from our results. We are, to the best of our knowledge,
the first to tackle sliding window dynamic graph problems in the parallel setting. Investigating other
algorithms in this setting could lead to a variety of new problems, tools, and solutions.
Given our results in the sliding window model, it seems likely that the results of \cite{simsiri2016work}
in the incremental model can be improved for large batch sizes, to at least match our algorithm.
It would also be interesting to explore other applications of our batch-incremental MST algorithm,
or possibly even the compressed path tree by itself.
Finally, since the span bound of our incremental MST algorithm is bottlenecked by the span of the
RC tree algorithms, designing a faster (i.e.\ $O(\log(n))$ span) RC tree algorithm would improve the span
of the results in this paper. We believe that such an algorithm is possible, but will require
further tools and techniques.

\section*{Acknowledgements}

This work was supported in part by NSF grants CCF-1408940 and CCF-1629444. The authors
would like to thank Ticha Sethapakdi for helping with the figures.

  \bibliographystyle{abbrv}
  \bibliography{ref}
  \balance

\end{document}